\renewcommand{\paragraph}[1]{\vspace*{0.05in}\noindent\textbf{#1}}
\newcommand{\TBD}{\textcolor{red}{TBD}\xspace}	
\newtheorem{lemma}{Lemma}
\newtheorem{theorem}{Theorem}
\newtheorem{proposition}{Proposition}
\newcommand{\D}{\mathcal D}
\renewcommand{\S}{\mathcal S}										
\newcommand{\A}{\mathcal A}											
\newcommand {\p}{\alpha}									         	
\newcommand {\lbda}{\zeta}							
\newcommand {\Lbda}{Z}
\newcommand{\E}[2]{\mathop{\mathbb{E}}_{#1}\!\left [\xspace#2\xspace\right ]\xspace}		
\newcommand{\Pro}[1]{\mathbb{P} \rp{#1} }		
\newcommand {\R}{\mathbb{R}}								        		
\newcommand {\M}{\mathcal{M}}		
\newcommand{\rp}[1]{\left( #1 \right)}
\newcommand{\expval}[1]{\mathbb{E}\left[ #1 \right]}
\newcommand{\expvalDist}[2]{\mathbb{E}_{#1} \left[ #2 \right]}
\newcommand{\secondProblem}{load balancing\xspace}
\newcommand{\extver}[1]{\cref{#1}}
\begin{document}

\title{Optimal Flow Admission Control in Edge Computing via Safe Reinforcement Learning}


\author{A. Fox$^{\diamond}$, F. De Pellegrini$^{\diamond}$, F. Faticanti$^{\star}$, E. Altman$^{\dagger}$, and F. Bronzino $^{\star}$\thanks{$^{\diamond}$ LIA, Avignon university, Avignon, France; $^{\star}$ENS, Lyon, France; $^{\dagger}$INRIA, Sophia Antipolis, France.}}
\maketitle

\begin{abstract}
With the uptake of intelligent data-driven applications, edge computing
infrastructures necessitate a new generation of admission control algorithms to
maximize system performance under limited and highly heterogeneous resources. In
this paper, we study how to optimally select information flows which belong to
different classes and dispatch them to multiple edge servers where 
applications perform flow analytic tasks. The optimal policy is obtained via the theory of 
constrained Markov decision processes (CMDP) to take into account the demand of
each edge application for specific classes of flows, the constraints on 
computing capacity of edge servers and the constraints on access network capacity. 

We develop DRCPO, a specialized primal-dual Safe Reinforcement Learning (SRL)
method which solves the resulting optimal admission control problem by reward
decomposition. DRCPO operates optimal decentralized control and mitigates
effectively state-space explosion while preserving optimality. Compared to
existing Deep Reinforcement Learning (DRL) solutions, extensive results show
that it achieves $15$\% higher reward on a wide variety of environments, while
requiring on average only $50$\% learning episodes to converge.
Finally, we further improve the system performance by matching DRCPO with load-balancing 
in order to dispatch optimally information flows to the available edge servers.
\end{abstract}
\begin{IEEEkeywords}
Edge computing, Admission Control, Constrained Markov Theory, Safe Reinforcement Learning.
\end{IEEEkeywords}

\pagestyle{plain}

\section{Introduction}\label{sec:intro} 

Edge computing techniques have emerged in recent years as a powerful solution to locally process a variety of information flows. Facing the need of serving exponentially growing service demands, infrastructure and service providers have responded by deploying  their resources, from processing to storage, at the network edge. Processing information as close as possible to its source significantly reduces the amount of data to transfer to remote cloud locations, thus decreasing latency and overhead during remote service access~\cite{LuoEdgeSurvey2021,hu2023edge}. Enabled by edge clouds, new classes of data intensive AI-based applications~\cite{hu2023edge,wang2018bandwidth,pakha2018reinventing,seufert2024marina,borgioli2023real} are now widespread. Unfortunately, while edge clouds offer an on premise computing solution, they are easily overwhelmed when demand exceeds available resources.

In fact, in contrast to the previous data center driven cloud model, edge clouds are often co-located with the existing network equipment and deploy limited computational resources. Thus, they can host a limited number of applications at any point in time. This generates the need of carefully designing solutions to orchestrate  the operations of deployed applications. For instance, existing edge-based  solutions often aim to efficiently configure available computing resources~\cite{hung2018videoedge,HuangDynAC2022,jiang2018chameleon,zhang2019hetero}
or attempt to manipulate how data flows are transported to reduce the transmission overhead~\cite{pakha2018reinventing}. This is indeed a major concern especially in smart-city environments \cite{Khan2019ECSmartCities}. Yet, as the number of applications 
and, more significantly, the number of information flows increase, the need for a new generation of admission control algorithms becomes apparent.  

Admission control is essential for managing resources efficiently, preventing under-utilization and degradation of service quality. It is widely used across various communication and computing systems, including mobile networks \cite{senouci2004call,raeis2020reinforcement}, web services \cite{Cherkasova2001}, optical networks \cite{Sue2011}, and cloud computing \cite{Konstanteli2014,Sajal_OSDI2023}. However, the performance of AI-based edge applications depends not just on networking or compute metrics but also on the information content, posing new challenges for admission control algorithms. When deployed at the edge, admission control algorithms must select information flows processed on edge servers to maximize the information extracted by deployed applications. Flow arrivals and departures affect application operations, especially when information flow sources are mobile nodes entering or leaving an area. Edge service virtualization allows replicating multiple instances of applications and deploying them on several servers simultaneously. Replication enhances robustness but requires precise performance considerations. The results obtained in this work highlight the need to orchestrate flow admission by considering the actual installation of compute modules on edge servers and the required access bandwidth.

Modern edge applications can be commonly characterized by five features: applications process flows generated by a large number of sources of different nature; these flows can enter or leave the architecture over time due to various events; the edge infrastructure deploys a set of applications to process the flows on edge servers which are equipped with a given amount of resources (e.g., compute and memory); finally, the distributed nature of both sources and edge servers imposes the implementation of a control plane mapping flows to compute infrastructure.

Earlier models for admission control in edge-computing systems have not yet addressed all of these challenges. Hence, in this paper we develop new theoretical foundations for the edge admission problem. We extend models originally developed for admission control in loss systems, which established the paradigmatic concept of trunk-reservation \cite{miller1969queueing}. In those early models, a finite service pool is made available to a finite set of service classes and each class is associated a certain reward for the admission of one of its customers. Markovian single-queue models for trunk-reservation have been studied in depth \cite{feinberg1994,feinberg2006,miller1969queueing,IGA}. 
While some multi-server admission control techniques have been studied for cloud computing, the focus is primarily on virtual machine placement relative to pricing \cite{Konstanteli2014} or overbooking \cite{Sajal_OSDI2023}. Once applications are placed onto edge servers, the framework considered in this work provides an optimal decentralised flow admission control logic. This necessitates several novel contributions: 

{\noindent\em System model (\cref{sec:sysmod}).} We develop a novel constrained Markov decision model to capture 
the dynamic admission control and load balancing of information flows originating from multiple sources. It accounts for heterogeneous capacity constraints for both access network and edge servers. It also includes applications' replication on multiple servers and their preferences on the classes of information flows they process. \\
{\noindent\em Solution concept (\cref{sec:cmdp}).} Using constrained Markov decision theory, we have derived the structural properties of the optimal decentralized admission control policy, showing it requires at most one randomized action per server. The result is not obvious since servers' states are reward-coupled. \\
{\noindent\em A new learning algorithm (\cref{sec: learning optimal admission policy}).} We introduce new tools to optimize mobile information admission control policy rooted in SRL. DRCPO is a novel actor-critic scheme that leverages the structure of the optimal solution to implement the optimal flow admission policy effectively. It is tailored for cases where the same application may be installed on several edge servers simultaneously. \\
{\noindent\em Load balancing (\cref{sec: optimizing routing policy}).} Finally, a two-stages joint optimization procedure increases further the system performance by jointly optimizing routing and admission control.

Our numerical results (\cref{sec: numerical results}) demonstrate that, by leveraging  the properties of the underlying Markovian model, not only it is possible to learn the optimal admission policy with no approximation, but this can be attained with a significant reduction in complexity with respect to state of the art techniques, which are typically oblivious to the structure of the optimal policy and value function. 



\begin{figure}[t!]
     \centering
     \begin{subfigure}[c]{0.40\columnwidth}
         \centering
         \includegraphics[width=\textwidth]{./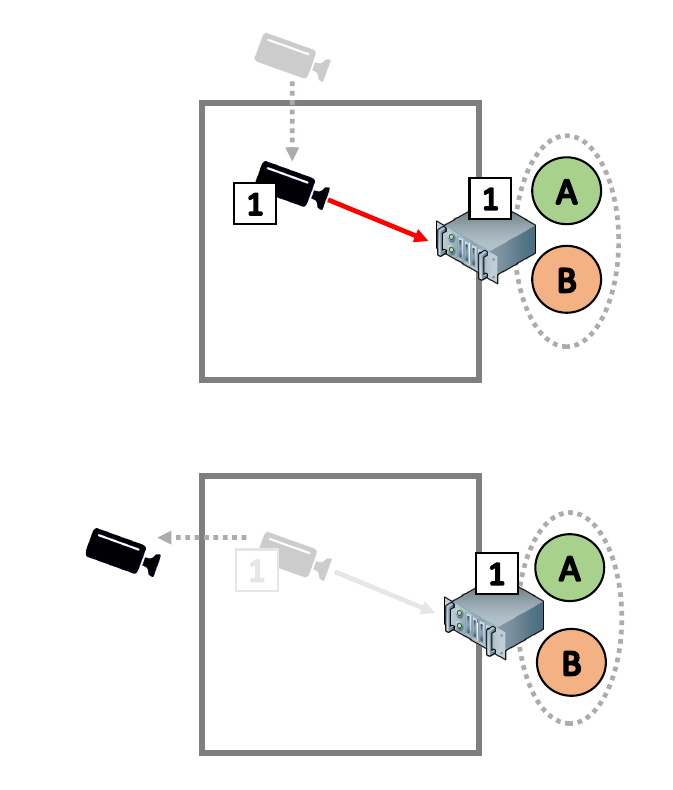}
         \caption{System events}\label{fig:arrival_and_departure}
     \end{subfigure}
     \hfill
     \begin{subfigure}[c]{0.58\columnwidth}
         \centering
         \includegraphics[width=\textwidth]{./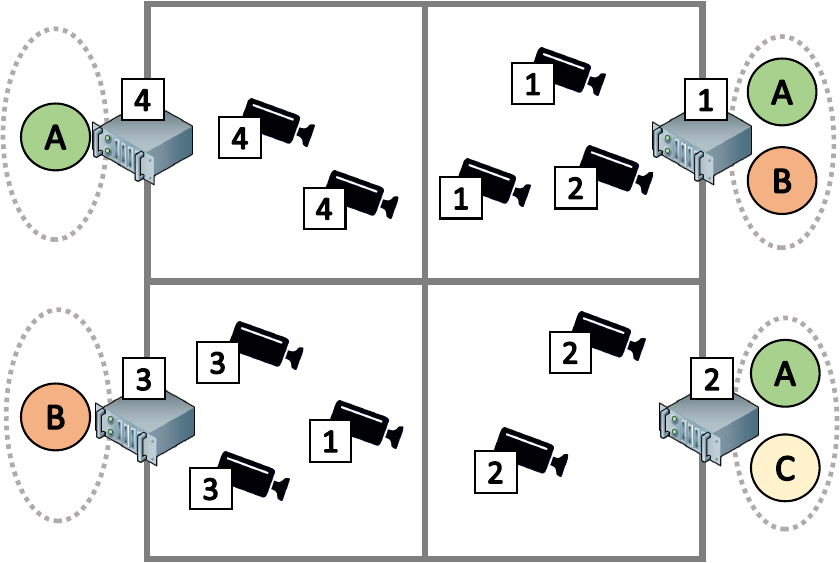}
         \caption{System state}\label{fig:state}
     \end{subfigure}
        \caption{(a) {\em Camera arrival and departure:} a camera arrives in area, it transmits its flow towards a tagged server (boxed index), then departs; (b) {\em System state:} using notation in Tab. \ref{tab:notation}, $M = 4$; $\D^1 = \{ A, B \}; \D^2 = \{ A, C \}; \D^3 = \{ B \}; \D^4 = \{ A \}$; $X^1 = (2, 1, 0, 0); X^2 = (0, 2, 0, 0), X^3 = (1, 0, 2, 0); X^4 = (0, 0, 0, 2)$; $j = 1, i= 1$; $Y^1 = 3, Y^2 = 2, Y^3 = 3, Y^4 = 2$.}
        \label{fig:system}
\end{figure}


\section{System Model}\label{sec:sysmod}


We introduce a semi-Markov model general enough to cover the main characteristics  of the edge flow admission control problem just outlined. It features the point process of arrivals and departures of flows belonging to a certain class, the coverage requirements of applications installed on edge servers (described by their utility function), the routing of flows to different servers and, finally, a policy to admit flows to edge servers. We now precise its mathematical definition.

Flows belong to class index $j \in\{ 1,\ldots, M\}$. They are generated according to a Poisson process of intensity $\lbda_j$. A flow of class $j$ remains active for an exponential time of mean $1/\mu_j$ seconds, after which it leaves the system. The flow arrival processes of different classes are independent and independent of the servers' occupancy. Edge applications consist of different modules installed on some designated servers; we say that an application is installed on a server if that application has a module deployed there. In \cref{fig:system}b, we have represented a use case for video analytics. There, here each class corresponds to video stream sources situated in one of $M=4$ areas. Each area hosts a designated edge server. An application is installed on multiple servers, for instance application $A$ in Fig.~\ref{fig:system} resides on server $1$, $2$ and $4$. We consider the case of $M$ edge servers: the general case is a straightforward extension.

Let $u_j^i$ denote the probability that a flow of class $j$ is routed towards server $i$.\footnote{Throughout the paper we use subscript indexes to denote the class of the flow and superscript indexes to denote destination servers.} The aggregated arrival rate at server $i$ is $\Lbda^i=\sum_j u_j^i \lbda_j $, and the total arrival rate $\Lbda=\sum_j \zeta_j$. Let denote $\p_j^i= u_j^i \lbda_j/\Lbda^i$ the probability that an arrival is of class $j$ and it is routed to server $i$. Once routed to server $i$, a flow is either accepted or rejected for service depending on the system state. If accepted at server $i$, it can feed the modules of applications installed on that server.
We further assume perfect information, i.e., different servers are aware of the state of other servers. The decision-making process regarding accepting or rejecting an incoming flow also depends on the number of flows from the same class already processed by the same application across the entire system. The computational capacity of server $i$ allows it to process at most $\psi^i$ concurrent flows simultaneously.

\begin{table}[t!]
	\centering
	\begin{tabular}{|p{0.16\columnwidth}|p{0.68\columnwidth}|}
		\hline
		\rowcolor{gray!30}{\bf Symbol} & {\bf Meaning}                         \\
		\hline
		    $M$          & number of classes                       \\
            $\lbda_j$    & arrival rate of class $j$ flows       \\
            $\mu_j$      & mean duration of class $j$ flows       \\
            $u_j^i$      & prob. of routing flows of class $j$ to server $i$        \\
		$\D^i$       & set of applications installed on server $i$;  \\
        $d^i$ & number of applications installed on server $i$\\
            $\phi_d$     & servers on which $d \in \D$ is installed\\
		$\S$         & state space                           \\
            $S$          & state $S=(X,J,I)$, $X=(X^1,\ldots,X^M)$ \\
            $X^i=(X_j^i)$    & flows of class $j$ active on server $i$\\
            $Y^i$        & total occupation $Y^i=\sum_{j=1}^M X_j^i$ of server $i$    \\
             $\A=\{0,1\}$ & action space                           \\
		$\psi^i$        & computational capacity of server $i$  \\
		$\theta^i$   & access capacity of server $i$         \\
		$\chi_j(d)$   & coverage requirement of app. $d$ for class $j$\\
		\hline
	\end{tabular}\caption{Main model notation.}\label{tab:notation}
\end{table}
Our semi-Markov decision process extends the models presented in \cite{feinberg1994,feinberg2006}. The continuous process is sampled at each arrival time ${t}$ of an information flow. This results into the discrete-time MDP $\M = (\S ,\A, P)$ \cite{lippman1974}. We define $\S$ the system state, $\A$ the action set and $P$ the MDP probability kernel. Whenever possible, {\em uppercase} notation, e.g., $S$, refers to a random process, and {\em lowercase} notation, e.g., $s$, to its realization. $\D^i$ is the set of applications installed on server $i$. 
Variable $\chi_j(d)\in \{0,1\}^M$ indicates whether application $d$ is interested in flows of class $j$ ($\chi_j(d)=1$) or not ($\chi_j(d)=0$).

\paragraph{System state.} The state is a triple $S(t)=(X(t),J(t),I(t))$: 

\noindent i. $X(t)$ is the matrix representing the system occupation at time $t$, where $X_j^{i}(t)$ denotes the number of flows of class $j$ being routed to server $i$ at time $t$ and being processed by applications in $\D^i$. $Y^i(t):=\sum_j X_j^i(t)$ is the corresponding server $i$ total occupancy.

\noindent ii.  $J(t)$ represents the class of the incoming flow;

\noindent iii.  $I(t)$ is the destination server for the incoming flow.

\paragraph{Action set.} The admission of an incoming flow for processing at a certain server is represented by action $A(S(t)) \in \A \rp{S(t)} \subset \{0,1\}$. Here $A(S(t))=0$ signifies \emph{reject} and $A(S(t))=1$ denotes {\em accept}. If $i$ is the destination server and $Y^i(t)=\psi^i$, then $\A(S(t))=\{0\}$, as server $i$ has no available capacity to host additional flows.  

\paragraph{Probability kernel.} Policy $\pi: \S \rightarrow \A$ associates to state $S(t)$ a probability distribution over action set $\A(S(t))$. Let $p( s' |s,a)$ $=\Pro{S(t+1)= s'|S(t)=s,A(t)=a}$ denote the transition probabilities 
\begin{align}
    p( s'|s,a) &= p((x',j',i')|(x,j,i),a) \notag \\
    &= \p_{j'}^{i'}\;  p({x'}^i_j | x^i_j + a) \mathop{\prod_{k, m = 1}^M}_{(k, m)\not= (i, j)} p({x'}^k_m | x^k_m)
    \label{eq:transition probabilities}
\end{align}
where  $p({x'}^k_m| x^k_m )=\Pro{X^k_m(t+1)=  {x'}^k_m|X^k_m(t)= x^k_m }$.\\\\
Let denote $\widehat p(u; x_j^i)$ the probability of the event that $u$ flows of class $j$ being routed to server $i$ leave in between two arrivals, given that $x_j^i$ flows are active on server $i$: it holds
\begin{equation}\label{eq:deathprob}
	\widehat p(u; x_j^i)= \Lbda \int_0^\infty {x_j^i \choose u} e^{-\mu_j t(x_j^i - u)} (1-e^{-\mu_j t})^u e^{-\Lbda t} dt \nonumber
\end{equation}
for $0 \leq u \leq x_j^i$ and it is zero otherwise. Hence, 
the state transition probabilities at server $i$ are derived as 
\begin{equation}\label{eq:transprob}
	p({x'}_j^i | x_j^i,  a)= \widehat p (x_j^i - {x'}_j^i + a^i_j; x_j^i)
\end{equation}
with $a_j^i = a$ if and only if $I(t)=i, J(t) = j$ and $a_j^i = 0$ otherwise. Clearly this probability is nonzero only when ${x'}_j^i \leq x_j^i + a_j^i $.

\paragraph{Rewards.} Let $R_{t+1}$ be the reward attained after the action at time $t$, following the traditional notation in \cite{suttonRL}. In particular, $r(s,a)=\E{}{R_{t+1}|S(t)=s,A(t)=a}$. By admitting a flow of class $j$ to server $i$, the instantaneous reward for applications binding to the tagged flow is expressed as
\begin{equation}
r(s,a) = a \cdot \sum\limits_{d \in \D_i} r_d(x) \chi_j(d)
    \label{eq:immediate reward}
\end{equation}
where $r_d(\cdot)$ is the marginal gain attained by binding a new flow to application $d$.
Later, we define $w_{j, d}$ as the total amount of flows of class $j$ currently being processed by application $d$ in the system, and $\phi_d$ as the set of servers on which application $d$ has been installed. Specifically, $w_{j, d} = \sum_{i \in \phi_d} x^i_j$. The immediate reward considered will only depend on this quantity: $r_d(x) = r_d(w_{j, d})$. Additionally, we assume that the immediate reward for application $d$ is a non-increasing function of $w_{j, d}$. Finally, we define $w_j = (x^1_j, \dots, x^M_j)$ as the vector describing the number of flows of class $j$ active across all servers.

\paragraph{Policy.} The admission policy $\pi$ is stationary, i.e., a probability distribution over the state-action space set $\pi: \S \rightarrow \A$.    
In the unconstrained setting, the objective function to maximize for the admission control problem is the expected discounted reward $G_t:= \sum_{t=0}^{\infty} {\gamma^{k} R_{t+1+k}}$ starting from initial state $s_0$. We define the value function
\begin{align*}
v_\pi(s) 
=\expvalDist{\pi}{\sum\limits_{t=0}^{\infty} \gamma^{t} r \rp{S(t),\pi(S(t))}| S(0)=s} 
\end{align*}

For every stationary deterministic policy, the resulting Markov chain is regular, meaning it has no transient states and a single recurrent non-cyclic class \cite{feinberg1994}. Next, we introduce the CMDP formulation to account for the physical constraints of the system considered, particularly the constraint on access capacity.


\section{The CMDP model}\label{sec:cmdp}


In CMDP theory \cite{CMDP}, the discounted reward is taken w.r.t. the initial state distribution $\beta:\S \rightarrow \Delta$ :
\begin{equation}\label{eq:avcost1}
	J_\pi(\beta)= \E{s\sim \beta}{v_\pi(s)}
\end{equation}
The access network to server $i$ has capacity $ \theta^i$. Thus, the aggregated long-term throughput demanded by the admitted flows should not exceed such value. We define $c^i: S\times A\rightarrow \R$ the instantaneous cost related to the access  bandwidth constraint: 
\begin{equation}
c^i(s,a) = a \cdot \widehat{c}^i(y^i) 
\label{eq:immediate cost}
\end{equation}
where $\widehat{c}^i$ is an increasing function of $y^i$.\\
The vector $K_\pi(\beta)=(K_\pi^1(\beta),\ldots,K_\pi^M(\beta))$ represents the discounted cumulative constraint, where 
\begin{equation}\label{eq:constraint}
	K_\pi^i(\beta)= \expvalDist{\pi, s\sim \beta}{\sum\limits_{t=0}^{\infty} \gamma^{t} c^i \bigg( X^{i}(t),\pi(S(t)) \bigg) | S(0)=s}  
\end{equation}
For a fixed access capacity vector $\theta=( \theta^1,\ldots, \theta^M)$, and a feasible initial state distribution $\beta$, we seek an optimal policy solving the edge flow admission control (\ref{MVAC}) problem
\begin{align}
	\mathop{\mbox{maximize:}}_{\pi \in \Pi}   \; & J_\pi(\beta)  \label{MVAC}\tag{EFAC}       \\
	\mbox{subj. to:} \;                          & K_\pi(\beta) \leq  \theta \label{eq:constr}
\end{align}
We denote as $J^*(\beta)$ the corresponding optimal value. 

The following structural result will be the basis of the SRL algorithm presented in the next section.
\begin{theorem}[]\label{thm:monotone} If the \ref{MVAC} problem is feasible, then \\
    i. There exists an optimal stationary policy $\pi$ which is randomized in at most $M$ states;\\
	ii. Such policy is a deterministic stationary policy if the constraint is not active; \\
	iii. When at least one constraint is active, within the optimal stationary policy outlined in i., each state where the optimal policy is randomized corresponds to a distinct destination server.
\end{theorem}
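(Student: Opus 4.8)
The plan is to recast \eqref{MVAC} as a linear program over discounted state--action occupation measures, in the spirit of Altman's CMDP theory \cite{CMDP}, and to read the structure of the optimal policy off the geometry of that LP's feasible polytope. To a stationary policy $\pi$ I associate $\rho_\pi(s,a)=\expvalDist{\pi,\,S(0)\sim\beta}{\sum_{t\ge0}\gamma^{t}\1{S(t)=s,\,A(t)=a}}$; the regularity of the chain induced by any stationary policy, noted above, gives $\sum_{a\in\A(s)}\rho_\pi(s,a)>0$ for every $s\in\S$, and $\pi\mapsto\rho_\pi$ is a bijection between stationary policies and the occupation polytope $\{\rho\ge 0:\ \sum_{a}\rho(s,a)=\beta(s)+\gamma\sum_{s',a'}p(s\,|\,s',a')\,\rho(s',a')\ \forall s\in\S\}$; adjoining the cost halfspaces $\langle c^i,\rho\rangle\le\theta^i$ yields the bounded feasible set $\mathcal F$ of \eqref{MVAC}, on which $J_\pi(\beta)=\langle r,\rho_\pi\rangle$ and $K^i_\pi(\beta)=\langle c^i,\rho_\pi\rangle$. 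Feasibility of \eqref{MVAC} gives $\mathcal F\neq\emptyset$, so the LP $\max_{\rho\in\mathcal F}\langle r,\rho\rangle$ attains its optimum at a vertex $\rho^\star$, to which corresponds the candidate policy $\pi^\star$ via $\pi^\star(a\mid s)=\rho^\star(s,a)/\sum_{a'}\rho^\star(s,a')$.

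For claims i. and ii. I count the positive coordinates of $\rho^\star$. The balance equations number $|\S|$ and (for $\gamma<1$) have full rank $|\S|$, so at the vertex $\rho^\star$ the number of strictly positive coordinates is at most $|\S|+r$, where $r\le M$ is the number of cost constraints active (tight) at $\rho^\star$. Conversely, by regularity each state contributes at least one positive coordinate, and a state in which $\pi^\star$ randomizes contributes at least two; writing $k$ for the number of randomized states, $|\S|+k\le|\S|+r$, i.e.\ $k\le r\le M$, which is claim i. If no constraint is active then $r=0$, hence $k=0$ and $\pi^\star$ is deterministic, which is claim ii.; equivalently, with all constraints slack the optimal Lagrange multipliers vanish and \eqref{MVAC} collapses to the unconstrained MDP, whose optimum is deterministic.

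The substance is claim iii., and here I bring in two problem-specific facts the count above ignores: (a) by the product form of the kernel \eqref{eq:transition probabilities}, the occupancy $X^i$ of server $i$ changes only on an arrival routed to $i$, i.e.\ on the states in $\S^i:=\{(x,j,i')\in\S:\ i'=i\}$, and $\{\S^1,\dots,\S^M\}$ partitions $\S$; (b) by \eqref{eq:immediate cost} the cost $c^i$ is charged only on admissions to server $i$, so the $i$-th cost-constraint row of the LP is supported entirely on $\{(s,a):s\in\S^i\}$. The target is to sharpen claim i. to: each active constraint forces at most one randomization, which moreover lies in the block $\S^i$ of that constraint's own server; since distinct active constraints carry distinct servers, the $k\le r$ randomized states then sit in $k$ distinct blocks $\S^i$, which is claim iii. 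Concretely I would argue by contradiction: if $\pi^\star$ randomized at two states $s_1,s_2\in\S^{i_0}$, I would exhibit a nonzero direction $\delta$ with $\mathrm{supp}(\delta)\subseteq\mathrm{supp}(\rho^\star)$, $\delta$ in the kernel of the balance operator, and $\langle c^i,\delta\rangle=0$ for every active $i$; then $\rho^\star\pm\varepsilon\delta\in\mathcal F$ for small $\varepsilon$ and, by optimality, $\langle r,\cdot\rangle$ is constant on that segment, contradicting that $\rho^\star$ is a vertex. The natural $\delta$ is the occupation-measure derivative obtained by perturbing only the two mixing probabilities at $s_1$ and $s_2$, which automatically lies in the kernel of the balance operator and keeps $\mathrm{supp}(\delta)\subseteq\mathrm{supp}(\rho^\star)$.

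The hard part is closing this last step. Because the reward \eqref{eq:immediate reward}, through $r_d$ depending on the aggregate $w_{j,d}=\sum_{i\in\phi_d}x^i_j$, couples servers, the optimal action at an $\S^i$-state can depend on the occupancies of other servers, so \eqref{MVAC} does not split into $M$ independent per-server CMDPs and a naive per-block rank count fails: perturbing the decisions at $s_1,s_2\in\S^{i_0}$ can a priori perturb the constraint values $K^{i'}_{\pi}(\beta)$ of the other active servers $i'\neq i_0$. I expect the resolution to hinge on combining the product form of the kernel with the own-server support of $c^{i_0}$: one spends one of the two available degrees of freedom to hold $K^{i_0}$ fixed, and then shows, using that $X^{i_0}$ enters the other servers' dynamics only through the policy and not through the kernel, that the residual cross-effects on the remaining active $K^{i'}$ can be made to vanish; alternatively, a refined basic-feasible-solution analysis showing directly that the ``extra'' basic columns of $\rho^\star$ localize into the blocks of the active servers. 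Turning one of these two routes into a rigorous argument is the crux of claim iii.
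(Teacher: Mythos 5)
Your treatment of claims i.\ and ii.\ is sound and is essentially the paper's: both reduce \ref{MVAC} to the occupation-measure LP and count basic columns (the paper simply invokes Thm.~3.8 of \cite{CMDP} for i.\ and \cite{puterman2014markov} for ii.\ rather than redoing the vertex count). The genuine gap is exactly where you flag it: claim iii.\ is not proved. You correctly isolate the obstruction --- the reward $r_d(w_{j,d})$ couples the blocks $\S^1,\dots,\S^M$ through $w_{j,d}=\sum_{i\in\phi_d}x^i_j$, so perturbing the two mixing probabilities at $s_1,s_2\in\S^{i_0}$ alters the occupation measure, and hence the cost values $K^{i'}$, on every other block, and your candidate direction $\delta$ has no reason to annihilate the other active cost rows. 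Spending one degree of freedom to hold $K^{i_0}$ fixed leaves a single free parameter against up to $M-1$ further linear conditions, so the contradiction with vertexhood does not follow; neither of the two routes you sketch is carried out, and the first (naive per-block rank count) fails for the reason you yourself give.

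The paper closes this step by a different device worth noting: an exchange argument, block by block. Fix the optimal pair $(\pi^*,\rho^*)$ and a block $\S^{i}$, and build an auxiliary \emph{single-constraint} CMDP $\M_o$ on state space $\S^{i}$ whose transitions are the first-return transitions from $\S^{i}$ to $\S^{i}$ induced by $\pi^*$ outside the block, whose rewards aggregate the discounted reward collected along those excursions (this is what absorbs the cross-server reward coupling you were fighting), and whose unique cost constraint is $c^{i}$ with budget $P_i\theta_i$, where $P_i=\sum_{s\in\S^i,a\in\A(s)}\rho^*(s,a)$ --- the only cost supported on $\S^i$, by your own observation (b). Thm.~3.8 applied to $\M_o$ yields an optimal $\rho_o^*$ randomized in at most one state of $\S^i$; since $\rho^*/P_i$ restricted to $\S^i$ is feasible for $\M_o$, the spliced measure equal to $P_i\rho_o^*$ on $\S^i$ and to $\rho^*$ elsewhere is feasible for \ref{DLP} and no worse. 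Iterating over $i=1,\dots,M$ gives at most one randomized state per destination server, which is claim iii. If you insist on your perturbation route, you would in effect have to re-derive this block localization of the extra basic columns; as written, your argument establishes only the weaker bound of claim i.
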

\begin{proof}
The proof is provided in \extver{appendix:proof structure optimal policy}
\end{proof}
From the computational standpoint, an optimal solution of \ref{MVAC} can be determined by solving a suitable dual linear program CMDP \cite{AltSchw1990}, which depends explicitly on the initial distribution $\beta$.  

The learning approach utilized in the following section is grounded in the lagrangian formulation, which simplifies problem \ref{MVAC} to a non-constrained inf-sup problem \cite{CMDP}. 
\begin{equation}
\hskip-2mm \inf_{\lambda \geq 0}\!\sup_\pi L(\lambda, \pi)= \inf_{\lambda \geq 0} \!\sup_\pi \left[ J(\pi, \beta) - \lambda \left( K(\pi, \beta)- \theta\right) \right] 
\label{eq:equivalent unconstrained problem}
\end{equation}
Hence, the penalized Q-function for a given policy becomes
\begin{equation*}
Q_{\pi}^{\lambda}(s,a) = \expvalDist{\pi}{\sum_{t=0}^\infty \gamma^t r^{\lambda}(S(t), A(t)) \mid S(0) = s,A(0)=a}
\end{equation*}
where, for a fixed multiplier $\lambda$, $r^{\lambda}(s,a) = r(s, a) - \lambda c(s, a)$ is the penalized reward. The penalized value function writes $v_\pi^{\lambda}(s)=Q_{\pi}^{\lambda}(s,\pi(s))$. 

{\noindent \em Remark.} {\em We note that \eqref{eq:constraint} can be considered also in the form of an average constraint \cite{tessler2018reward,CPO}. To this respect, it is worth observing that our solution 
works also for the average reward form of \ref{MVAC}. However, for the sake of comparison with state of the art methods, the discounted form is the most popular formulation in safe reinforcement learning \cite{tessler2018reward}.}


\section{Learning the optimal admission policy}
\label{sec: learning optimal admission policy}
\begin{algorithm}[t]
	\caption{Decomposed Reward Constrained Policy Optimization}
	\label{alg: three timescale algorithm description}
	\begin{algorithmic}[1]
		\State Initialize $\lambda(0)$, initial policy $\pi(0)$
		\For{$k = 0, 1, \dots$}
			\State Initialize $S(0) = \rp{ x^i(0), j(0), i(0) }\sim \beta$
			\For{$t = 0, \dots, T$}
				\State $A(t) \sim  \pi(k) (S(t))$  
				\State $S(t+1), R(t+1) \sim  \texttt{actor}(S(t), A(t))$
				\State Critic update for each component according to \cite{Watkins_Dayan_1992}
                \State Actor update: $\epsilon$-greedy policy
			\EndFor
			\State $\lambda$ update according to \eqref{eq:lagrange multiplier update rule} and  \eqref{eq:lagrangian update}
		\EndFor
	\end{algorithmic}
\end{algorithm}
In situations where transition probabilities \eqref{eq:transprob} are unknown, we can resort to RL algorithms to determine an optimal policy for \ref{MVAC}. We design a model-free safe reinforcement learning (SRL) algorithm to account for both the instantaneous reward \eqref{eq:immediate reward} and cost \eqref{eq:immediate cost}. This section begins with a brief introduction to reward decomposition in reinforcement learning, followed by an explanation of the motivation for the chosen type of safe reinforcement learning (Lagrangian relaxation). The next paragraph describes the introduced algorithm, which is built on these concepts. The section concludes with a brief discussion of the algorithm's convergence properties and a remark on its adaptation to the multi-constraint setting studied.

\paragraph{Reward and cost decomposition.} 
For the system at hand, the full state space $\S$ has cardinality $\Omega(M^{\psi+2})$ where 
$\psi = \max \{\psi^i\}$. A direct tabular RL approach is not viable, as typical in resources allocation problems \cite{MaoHotNets2016}. 
Reward decomposition has been introduced in \cite{russell2003q} to decompose a RL agent into multiple sub-agents, where their collective valuations determine the global action. 
In previous works, the method has been applied in the conventional unconstrained setting only, see \cite{suttonHordeScalableRealtime2009,vanseijenHybridRewardArchitecture2017,JuozapaitisExplRL}.
By breaking down the reward function into components within the original setting, the policy improvement step is simplified by considering a separable state-action value function \cite{JuozapaitisExplRL}. To best of the author's knowledge, the algorithm proposed is the first one to consider the idea of reward and cost decomposition in the context of safe reinforcement learning. 

\paragraph{Lagrangian relaxation methods for safe reinforcement learning.}
Our method is rooted in the template SRL actor-critic algorithm for the single constraint case proposed by Borkar in 2005 \cite{borkar2005}. This algorithm prescribes a primal-dual learning procedure to solve the CMDP linear program \cite{CMDP} using a three-timescale framework. The two fast timescales learn the optimal policy using an actor-critic approach for a fixed Lagrange multiplier $\lambda$. 
The optimal value of $\lambda$ is determined via gradient ascent performed at the slowest timescale.

Compared to other methods like interior point methods \cite{IPO} or trust-region methods \cite{CPO}, this approach considers an estimate of the penalized Q-function, along with the usual Q-learning update rule, which more naturally handles the decomposition operation of both the reward and cost functions.

\paragraph{Decomposed Reward Constrained Policy Optimization.}
This paragraph describes the RL algorithm derived from \cref{alg: three timescale algorithm description} by incorporating the decomposed actor-critic component and the Lagrange multiplier update. 
This algorithm is referred to as Decomposed Reward Constrained Policy Optimization (DRCPO).

In our scenario, a natural option is to identify a component for each pair $(i,d)$ representing a destination edge server $i$ and an application $d$ installed therein. Moreover, fixed component $(i,d)$, we can observe how  a reduced representation of the state, $\tilde{s}=(w_{k, d},y^i, k, m)$, is sufficient to compute the immediate penalized reward for each component.  This observation can be useful in reducing the amount of estimates to compute for each component, as it aggregates several different states of the system. For each of the $\sum_i d^i$ components, we will consider a reduced state space of cardinality $\Omega(M \psi^2)$.

In general, given an arrival of class $k$ routed to server $m$, if the flow is admitted, the reward is non-zero only for the components $(m, d)$, where the application $d$ interested in information flows of class $k$ ($\chi_k(d) = 1$). This is given by
\begin{equation}
    r^{i, \lambda}_d(s, a):=r_d(w_{k, d}, a) - \lambda^i c^i (y^i, a)
    \label{eq:immediate reward component}
\end{equation}
with $r_d (\cdot, a)$ and $c^i(\cdot, a)$ having the properties described in \eqref{eq:immediate reward} and \eqref{eq:immediate cost}, respectively. For all other components the penalized reward is null. \\
Finally, we can define, for each component, the corresponding Q-function, following the usual definition:
$$Q^{i, \lambda}_d(s, a) = \E{\pi}{\sum_{t=0}^\infty \gamma^t r^{i, \lambda}_d (S(t), A(t))|S(0)=s,A(0)=a }$$
In the resulting actor-critic scheme, the critic will feature the aggregated Q-function, with updates being performed for all components $Q_d^i$ at each step, using the traditional learning rule of Q-learning \cite{Watkins_Dayan_1992}.

Regarding the action selection, we observe that, due to the reduced size of the action space, the actor can utilise a simple $\epsilon$-greedy exploration strategy \cite{szepesvariAlgorithmsReinforcementLearning2010}. In doing so, convergence to the optimal solution still occurs, but in the set of $\epsilon$-greedy policies: the resulting policy may be sub-optimal, in sight of \cref{thm:monotone}. Conversely, this approach greatly simplifies the  exploration process by considering just deterministic policies, while significantly reduces the policy search space. As seen in the numerical experiments in \cref{sec: numerical results}, for large values of $M$, the loss in performance becomes negligible. Also, alternative, policy gradient methods which can handle more complex action spaces are possible, but are left as future work.

Finally, the Lagrange multiplier update \cite{borkar2005} is the gradient descent step 
\begin{equation}
    \lambda_{k+1} = \lambda_k - \eta_t \nabla_\lambda L(\lambda, \pi_{\tau})
    \label{eq:lagrange multiplier update rule}
\end{equation}
for suitable values of the learning rate $\eta_t$ (line 10), where  
\begin{equation}
	\nabla_\lambda L(\lambda, \pi) = - \rp{ \expvalDist{s}{K(\pi, \beta)} - \alpha }
	\label{eq:lagrangian update}
\end{equation}
\cref{alg: three timescale algorithm description} outlines the structure of the proposed scheme for the episodic form\footnote{The average reward formulation can be derived as described in \cite{suttonRL} }.

\paragraph{Convergence of DRCPO.}
In this paragraph, for the sake of readability, we will omit the symbol $\lambda$ used to denote the Lagrange multiplier. In \cite{JuozapaitisExplRL} the following results regarding the components of the Q-function obtained following the decomposition approach have been proved:
\begin{proposition} Denote $Q_d^i(s, a)(t)$ the update of the $(i, d)$-th component after $t$ learning update. Under the usual conditions for the convergence of Q-learning \cite{Watkins_Dayan_1992}, $Q_d^i(s, a)(t)$ converges almost surely to the optimal component $Q_d^{i,*}(s, a)$, for every component $(d, i)$ and for every pair $s, a$. Moreover, it holds 
    \begin{equation*}
Q(s, a) (t)= \sum_{i=1}^M \sum_{d \in \D^i} Q^{i}_d \rp{s , a} (t)
\end{equation*}
converges a.s. to the optimal Q-function $Q^*(s, a)$ so that 
$$
Q^*(s, a) = \sum_{i=1}^M \sum_{d \in \D^i} Q^{i, *}_d \rp{s , a}
$$
for every pair $(s, a)$.
\label{prop:convergence component}
\end{proposition}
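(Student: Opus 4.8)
The plan is to reduce \cref{prop:convergence component} to three ingredients: an exact additive decomposition of the penalized reward, the classical tabular Q-learning convergence theorem applied to the \emph{aggregate} critic, and a standard policy-evaluation argument to recover the individual limits. Throughout, $\lambda$ is held fixed (the two fast timescales) and all quantities live on the finite state--action space $\S\times\A$; the same reasoning applies verbatim on the reduced representation $\tilde s=(w_{k,d},y^i,k,m)$ maintained by \cref{alg: three timescale algorithm description}. The first, purely algebraic step is to check that $r^\lambda(s,a)=\sum_{i=1}^M\sum_{d\in\D^i} r_d^{i,\lambda}(s,a)$ for every $(s,a)$: by \eqref{eq:immediate reward}, \eqref{eq:immediate cost} and \eqref{eq:immediate reward component}, on a transition triggered by an arrival of class $k$ routed to server $m$ the only components with nonzero reward are those $(m,d)$ with $\chi_k(d)=1$, and summing \eqref{eq:immediate reward component} over them returns exactly $r(s,a)-\lambda^m c^m(s,a)=r^\lambda(s,a)$. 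Each $r_d^{i,\lambda}$ is bounded on $\S\times\A$ (using that $\lambda$ remains in a compact set), so every component is a legitimate scalar reward, and the identity $Q(s,a)(t)=\sum_{i,d}Q_d^i(s,a)(t)$ holds for all $t$ as soon as it holds at $t=0$, since the per-component updates are affine in the current iterate and are driven by the common trajectory.

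Next I would establish convergence of the aggregate. Summing the per-component Q-learning updates --- which share the trajectory $\{(S(t),A(t))\}$ generated by the $\epsilon$-greedy behaviour policy and the same bootstrap action $a^*(S(t+1))\in\arg\max_a\sum_{i,d}Q_d^i(S(t+1),a)$ read from the aggregate critic --- and using the decomposition above, one sees that $Q(s,a)(t)$ obeys exactly the Watkins--Dayan recursion for the scalar reward $r^\lambda$. The hypotheses of \cite{Watkins_Dayan_1992} hold here: the step sizes satisfy the Robbins--Monro conditions by assumption, and every $(s,a)$ is visited infinitely often almost surely, because $\epsilon$-greedy exploration assigns positive probability to every admissible action and the chain induced on the finite state space is regular (\cref{sec:sysmod}). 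Hence $Q(s,a)(t)\to Q^*(s,a)$ a.s., where $Q^*$ is the optimal penalized Q-function and its greedy policy is the optimal penalized stationary policy $\pi^*$.

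It then remains to upgrade this to the components and to obtain identity (iii). Once $Q(s,a)(t)\to Q^*(s,a)$, after finitely many visits to each state the bootstrap action $a^*(\cdot)$ coincides, under a fixed deterministic tie-break, with the greedy action of $Q^*$, i.e.\ with $\pi^*$; from that point each component update is an asymptotically fixed-policy TD(0)/Sarsa evaluation of $\pi^*$ under reward $r_d^{i,\lambda}$. By convergence of TD(0) for a fixed stationary policy on a finite regular chain --- equivalently, the component-wise statement established in \cite{JuozapaitisExplRL} --- one gets $Q_d^i(s,a)(t)\to Q_d^{i,*}(s,a)$ a.s.\ for every $(i,d)$, where $Q_d^{i,*}(s,a)=\E{\pi^*}{\sum_{t\ge0}\gamma^t r_d^{i,\lambda}(S(t),A(t))\mid S(0)=s,A(0)=a}$. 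Finally, linearity of expectation applied to the decomposition of the first step yields $\sum_{i,d}Q_d^{i,*}(s,a)=\E{\pi^*}{\sum_{t\ge0}\gamma^t r^\lambda(S(t),A(t))\mid S(0)=s,A(0)=a}=Q^*(s,a)$, which is also $\lim_t Q(s,a)(t)$, closing the argument.

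The main obstacle is the third step: a single component's update is \emph{not} a contraction in isolation, because it bootstraps with the aggregate's greedy action rather than its own $\max$, so Q-learning convergence cannot be invoked component-wise. The resolution --- reading the aggregate as an ordinary Q-learner (second step) and each component as policy evaluation under the eventually-greedy policy --- is precisely the two-timescale/GLIE argument behind \cite{JuozapaitisExplRL}; some care is also needed with ties in $\arg\max_a Q^*(s,a)$, which are harmless here since $|\A|=2$ and a consistent deterministic tie-break singles out one optimal $\pi^*$ under which all components are evaluated.
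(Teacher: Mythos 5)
The paper does not actually prove \cref{prop:convergence component}: it imports the statement wholesale from \cite{JuozapaitisExplRL}, so there is no in-paper argument to compare yours against. Your reconstruction is essentially the standard Q-decomposition convergence argument from that line of work (\cite{russell2003q,JuozapaitisExplRL}), and it is sound in outline: the additive reward decomposition, the observation that the \emph{aggregate} iterate satisfies the ordinary Watkins--Dayan recursion (every component bootstraps with the arbitrator's greedy action, so the summed bootstrap term is exactly $\gamma\max_a Q(s',a)$), and the reduction of each component to policy evaluation of the eventually-greedy policy. You also correctly isolate the step where a naive argument would fail --- a single component is not a contraction in isolation --- which is the substantive content of the cited result.

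Two caveats are worth recording. First, the exactness of $r^\lambda=\sum_{i}\sum_{d}r_d^{i,\lambda}$ is less innocent than you make it: as written in \eqref{eq:immediate reward component}, every interested application $d$ on the destination server carries its own copy of the penalty $-\lambda^i c^i(y^i,a)$, so when more than one installed application matches the incoming class the naive sum over components counts the cost term multiple times and does \emph{not} return $r(s,a)-\lambda^i c^i(s,a)$. The decomposition, and hence the second identity of the proposition, is exact only under the convention that the cost is charged to a single component (or split among them); this should be stated rather than assumed. Second, your tie-breaking remark does not quite close the last gap: even with $\lvert\A\rvert=2$ and a deterministic tie-break, if $Q^*(s,0)=Q^*(s,1)$ the argmax of the \emph{estimates} can oscillate indefinitely while $Q(s,\cdot)(t)\to Q^*(s,\cdot)$, because the estimates are never exactly tied; the bootstrap policy then need not converge and the individual component limits may fail to exist or may depend on the oscillation (their sum still converges, but that is weaker than the claim). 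This is the known delicate point in the Q-decomposition literature and is usually handled by assuming a unique greedy action at every state, or that the arbitrator's policy converges; it deserves an explicit hypothesis rather than a parenthetical.
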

The convergence of DRCPO to the optimal solution is guaranteed, as stated in the following  
\begin{proposition} \label{prop:convergence}
Under standard assumptions on learning rate of stochastic approximation, DRCPO converges to an optimal solution of \ref{MVAC} w.p.1.
\end{proposition}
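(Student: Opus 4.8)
The plan is to cast DRCPO (\cref{alg: three timescale algorithm description}) as a three–timescale stochastic approximation scheme in the spirit of Borkar~\cite{borkar2005} and to verify, one timescale at a time, the hypotheses under which the coupled recursions converge almost surely to a primal–dual optimal point of \ref{MVAC}. The fastest timescale carries the critic updates of the decomposed action–value functions $Q^{i}_d$; the intermediate timescale carries the $\epsilon$-greedy actor; and the slowest timescale carries the Lagrange–multiplier recursion~\eqref{eq:lagrange multiplier update rule}. Throughout I would fix a Slater point — guaranteed by feasibility of \ref{MVAC} together with the existence of a strictly feasible stationary policy — which is used to confine the multiplier iterates to a compact interval, and impose the standard Robbins–Monro conditions on each of the three step-size sequences together with the usual timescale separation (the ratios of successive step sizes vanish). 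The target is to show $(\pi_k,\lambda_k)$ converges w.p.\,1 to a saddle point of $L(\lambda,\pi)$ in~\eqref{eq:equivalent unconstrained problem}, which by strong duality for CMDPs is an optimal solution of \ref{MVAC}.

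\emph{Inner loop (critic, then actor).} For frozen $\lambda$ and frozen policy $\pi$, the per-component recursions are exactly tabular Q-learning on the reduced state spaces of size $\Omega(M\psi^2)$ with penalized rewards~\eqref{eq:immediate reward component}. Since every stationary deterministic policy induces a regular chain (no transient states, single aperiodic recurrent class, as recalled after the value-function definition), every reduced state–action pair is visited infinitely often, so \cref{prop:convergence component} applies and gives $Q^{i}_d(t)\to Q^{i,\lambda,*}_d$ a.s.\ together with $\sum_{i}\sum_{d\in\D^i}Q^{i}_d(t)\to Q^{\lambda,*}$, the optimal penalized action-value function; from the viewpoint of the slower recursions the critic is therefore equilibrated at $Q^{\lambda,*}$. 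On the intermediate timescale, with $\lambda$ quasi-static, the actor performs greedy-in-the-limit improvement with infinite exploration: taking the schedule $\epsilon=\epsilon(t)\downarrow 0$ slowly enough to keep infinite exploration, a standard GLIE actor–critic argument~\cite{szepesvariAlgorithmsReinforcementLearning2010} shows the actor tracks the greedy policy with respect to the limiting critic and hence converges to an optimal policy $\pi^{*}(\lambda)$ of the unconstrained MDP with reward $r^{\lambda}$ on $\M$ (so the $\epsilon$-greedy suboptimality noted after \cref{thm:monotone} disappears in the limit). This realizes the inner $\sup_\pi$ of~\eqref{eq:equivalent unconstrained problem}, i.e.\ $\lambda\mapsto L(\lambda,\pi^{*}(\lambda))$ is the (concave-free, convex) dual function.

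\emph{Outer loop (multiplier) and conclusion.} The recursion~\eqref{eq:lagrange multiplier update rule} is then projected subgradient descent on the convex dual function, whose subgradient at $\lambda$ equals $-(K(\pi^{*}(\lambda),\beta)-\theta)$ by Danskin's theorem — consistent with~\eqref{eq:lagrangian update} — the projection onto the compact interval fixed by Slater being harmless because it contains every dual optimum. The limiting ODE $\dot\lambda=-(K(\pi^{*}(\lambda),\beta)-\theta)$, reflected at the boundary, has the set of dual optima as its globally asymptotically stable set, so $\lambda_k\to\lambda^{*}$. By strong duality for CMDPs~\cite{CMDP,AltSchw1990} there is no duality gap for \ref{MVAC}, hence $(\pi^{*}(\lambda^{*}),\lambda^{*})$ is a saddle point of $L$ and $\pi^{*}(\lambda^{*})$ solves \ref{MVAC}; when a constraint is active the one-randomization-per-server structure of \cref{thm:monotone} is recovered at $\lambda^{*}$. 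Assembling the three limits through Borkar's multi-timescale argument (the fast iterates see the slow ones as frozen; the slow ones see the fast ones as equilibrated) yields the claimed a.s.\ convergence.

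\emph{Main obstacle.} The delicate point is the interaction of decomposition with the coupled servers: one must check that the reduced per-component statistic $\tilde s=(w_{k,d},y^{i},k,m)$ remains a sound state along the \emph{non-stationary} iterates — that aggregating all global states sharing $\tilde s$ does not bias the component fixed point — and that the summed critic $\sum_{i,d}Q^{i}_d$ still induces, via $\epsilon$-greedy, the correct global greedy action even though servers are reward-coupled (the coupling flagged in \cref{thm:monotone}). Concretely this means verifying the hypotheses of \cref{prop:convergence component} for the specific MDP $\M$, together with the martingale-difference and stability conditions underpinning the ODE method, namely boundedness of the penalized rewards over the compact $\lambda$-range and $\sup_t\|Q(t)\|<\infty$; the a priori bound on the multiplier via the Slater point is the remaining ingredient that needs care.
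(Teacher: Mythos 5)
Your proposal follows essentially the same route as the paper's own (sketch) proof: both reduce the claim to Borkar's multi-timescale ODE argument for the primal--dual actor-critic, combined with Slater's condition to close the duality gap of the non-convex CMDP (the paper cites Paternain et al.\ for the zero-gap result, you invoke classical CMDP strong duality via the occupation-measure LP --- both are valid). The only substantive differences are that the paper exhibits the strictly feasible Slater policy concretely ($\pi \equiv 0$, which incurs zero cost) whereas you merely assume one exists, and that the decomposition-soundness issue you honestly flag as the ``main obstacle'' (whether the reduced per-component state $\tilde s$ remains a sufficient statistic along non-stationary iterates) is left equally unaddressed in the paper's own sketch.
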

\begin{proof}  
The proof is provided in \extver{appendix:proof covergence DRCPO}.
\end{proof}

Actually, the template described in \cref{alg: three timescale algorithm description} provide some flexibility in the implementation of DRCPO. For scenarios where the system involves large values of $\psi^i$, for instance, the critic component can be replaced by a neural network to estimate the value function component-wuse. Of course, at the cost of losing the guarantees of convergence to an optimal policy. 

{\noindent \em Remark.} {\em Remarkably, the literature on SRL does not provide efficient methods to solve \ref{MVAC} under multiple constraints \cite{liu2021policy}. However, \cref{thm:monotone} shows that, while $\M$ has coupled rewards, constraints are actually independent. This, in combination with the reward decomposition described in the next section, let us perform the parallel of multiple single-constraint learning updates. It's worth noting that this reduction permits to optimize the lagrangian vector component-wise in a single timescale.}
\section{\secondProblem}
\label{sec: optimizing routing policy}
Up to this point, we have solved \ref{MVAC} while assuming a given static routing control $\{u_j^i\}$. We now seek to optimize the routing control for the sake of \secondProblem. The objective is hence to maximize the reward of the system w.r.t. to joint admission control and routing: 
\begin{equation}
J_{u} (\beta) = \expvalDist{s \sim \beta}{ v_u(s)}
    \label{eq:objective function routing}
\end{equation}
 where $v_u(s)$ is the value function of the \secondProblem policy, defined as
\begin{equation}
v_u(s) := \expvalDist{\pi}{\sum_{t=0}^\infty \gamma^t r\rp{(X(t), J(t), u(J(t)) ), A(t)} \mid S(0) = s}
    \label{eq:value function routing policy}
\end{equation}
and $u(J(t))$ represents the server towards which the incoming flow of class $J(t)$ arriving at time $t$ has been routed to.

However, the analysis of the full Markov system, i.e., the SMDP where the action space encompasses both routing and admission appears extremely challenging, because the actions taken at each state are mutually dependent. 
Its analysis goes beyond the scope of the current work.

The optimization algorithm we propose alternates between two steps: the first one computes the new admission control policy given a fixed \secondProblem policy, according to the results of \cref{sec: learning optimal admission policy}. The second step optimizes the \secondProblem policy for a given admission policy with the following update step 
\begin{equation}\label{eq:stocapprox}
  u_j^i(t+1) =  \Pi\left [ u_j^i(t) + \alpha(t)  \hat{g}_j^i(t) \right ] 
\end{equation}
where ${\alpha}$ is a standard step-size sequence and $\Pi[a]=\max(0,\min(a,1))$ is a projection into $[0,1]$.
The gradient of the total return, denoted as $\hat{g}$, is approximated in the symmetric unbiased form, according to the SPSA algorithm \cite{fu1997optimization} 
\begin{equation*}
    \hskip-2.5mm \hat g_j^i(t) = \frac{ \hat{R}_{\tilde{\pi}} \rp{u_j^i(t) + c(t)\Delta_j^i(t) } - \hat{R}_{\tilde{\pi}} \rp{u_j^i(t) + c(t)\Delta_j^i(t)}}{2c(t) \Delta_j^i(t)}
\end{equation*}
where $c$ is the term of a standard stepsize sequence. $\{ \Delta \}$ is a vector part of a sequence of perturbations of i.i.d. components $\{ \Delta_j^i, i = 1, \dots, M, j = 1, \dots, M \}$ with zero mean and where $\expval{\lvert (\Delta_j^i(t))^{-2} \rvert}$ is uniformly bounded. Since we cannot ensure appropriate conditions on the objective function, namely unimodality or convexity, sequence \eqref{eq:stocapprox} is guaranteed to converge w.p.1 to a local maximum \cite{fu1997optimization}, thanks to some weaker regularity properties proved in \extver{appendix:convergence stochastic approximation}. 

The iterative procedure can continue until the convergence condition with respect to routing probabilities $\{u_j^i\}$ is attained . This procedure does not necessarily converge to the optimal value of the objective function \eqref{eq:objective function routing}, however, in the numerical section we have compared its performances to some heuristic methods and have observed its superiority. Details about the heuristic policies and the pseudocode of the adaptive method can be found in \extver{appendix: heuristic policies}. 

\section{Numerical results}
\label{sec: numerical results}


Numerical experiments are divided into three main groups. 

The first reported one compares the performance of our learning algorithm against the state-of-the-art general-purpose algorithm, namely RCPO \cite{tessler2018reward}. RCPO follows the same template outlined in \cref{alg: three timescale algorithm description}: it uses two Neural Networks (NNs) to approximate both the value function (critic) and the policy (actor). In Deep Reinforcement Learning (DRL), NNs act as an interpolator, greatly reducing the number of represented states \cite{MaoHotNets2016}. RCPO has been implemented to incorporate the full system state $S(t) = \rp{X(t), J(t), I(t)}$ as input for the neural networks. In the second experiment, we investigate how the reward varies as the number of applications installed per server increases.
The last experiment performs a comparative analysis of the \secondProblem policies proposed in \cref{sec: optimizing routing policy}.  
The system parameters for all experiments were randomly sampled from predefined sets: they are provided in \extver{appendix:details numerical experiments}. Each column in the table represents the sets considered to generate the corresponding data. 

\paragraph{Learning the optimal admission policy.}The results depicted in \cref{fig:comparison learning methods} illustrate a comparison of the performance among various admission control algorithms. The load balancing policy is uniform. Without loss of generality, the system hosts only one application per flow class, potentially installed across multiple servers. In this first test, the scenario features $10$ applications per server, ensuring precisely one application per flow class on each server.
The experiments encompassed a total of $20,000$ episodes, with policy evaluations conducted every $100$ episodes. Due to the heterogeneity of the tested system environments, the reward of each sample is normalized w.r.t. to the unconstrained optimal reward, while the cost is normalized w.r.t. the value of the constraint. The performance of DRCPO is compared with RCPO and also with a naive baseline policy. The baseline policy admits flows only when the server's total occupation is below a specific fixed threshold. The threshold value has been optimized to ensure feasibility while maximizing the reward.

The findings from \cref{fig:comparison learning methods} demonstrate that in the conducted experiments, DRCPO consistently outperforms RCPO in terms of the reward, while also demonstrating better compliance to access bandwidth constraints. Specifically, the results reveal that DRCPO achieves convergence to the optimal solution in fewer than $10^4$ episodes on average, whereas RCPO requires about twice the number of episodes.\\
Furthermore, when comparing the two SRL approaches with the previously described naive baseline, it becomes apparent that the solution provided by DRCPO yields a reward approximately $15\%$ higher, while the performance of RCPO are comparable to those of the naive policy based on the server's total occupation.

\begin{figure}[t]
    \centering
    \includegraphics[width = \linewidth]{./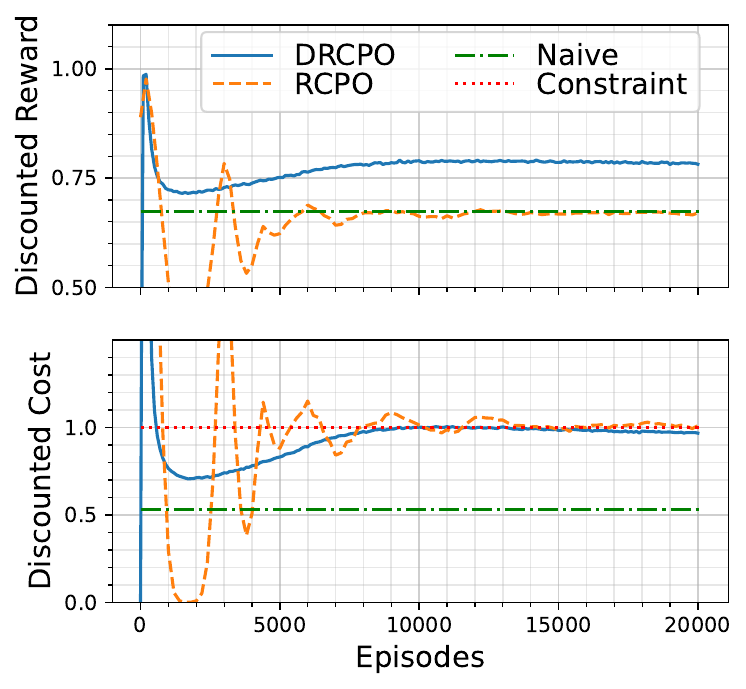}
    \caption{Learning dynamics for the a) discounted reward and b) discounted cost function.}
    \label{fig:comparison learning methods}
\end{figure}

\begin{figure}[t]
    \centering
    \includegraphics[width = \linewidth]{./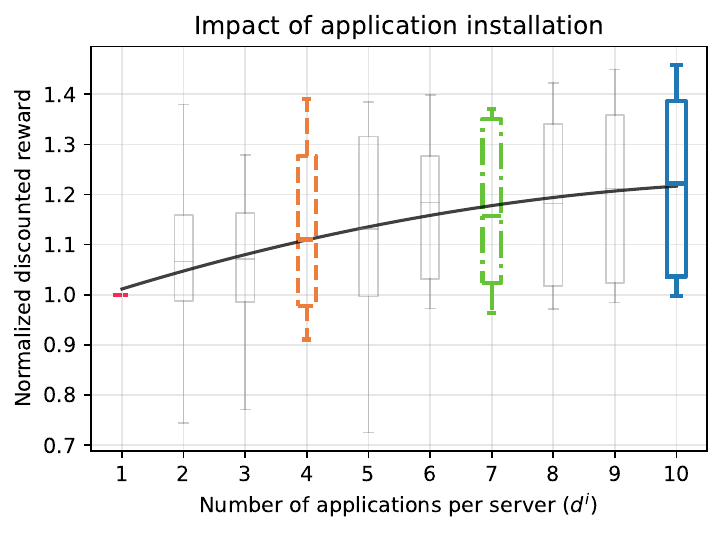}
    \caption{Optimal reward distribution at the increase of the number of applications per server. }
    \label{fig:impact application installation}
\end{figure}
\begin{figure}[t]
    \centering
    \includegraphics[width = \linewidth] {./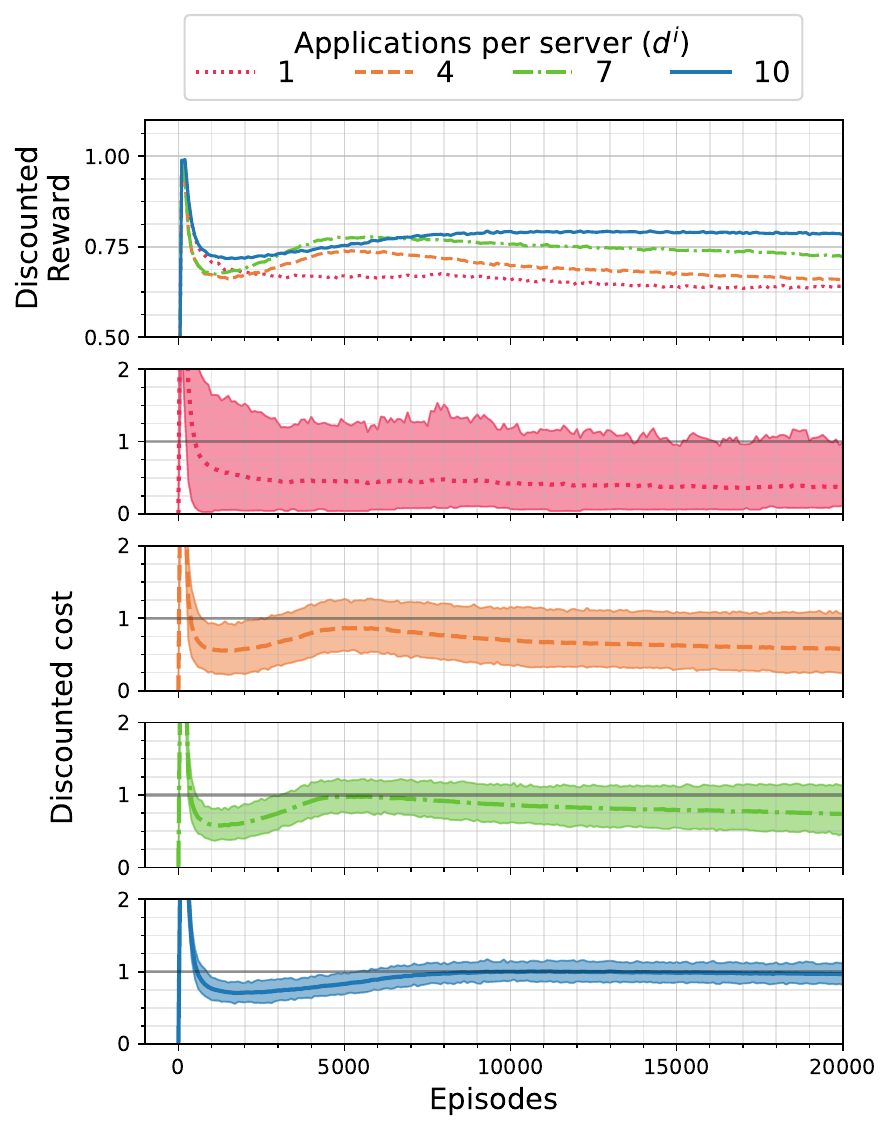}
\put(-22,284){a)} \put(-22,212){b)}\put(-22,162){c)}\put(-22,110){d)}\put(-22,58){e)}        \caption{The joint reward and Learning dynamics for various values of $d^i$; in (a), (b), (c),  (d), and (e) the discounted cost dynamics. Dashed line: median. Upper and lower borders of the shaded regions: server with highest and lowest associated cost, respectively.}
    \label{fig:comparison cost function}
\end{figure}
\paragraph{Impact of Application Installation.} We conduct $20$ distinct experiments and analyze a system consisting of $10$ servers, flows classes, and applications.
We observe the trend followed by the optimal discounted reward as the number $k$ of servers hosting each application increases. It's worth noting that in each experiment, exactly $k$ applications are installed on each server, and each application is installed on exactly $k$ servers.

Although the experiments span all possible values of $d^i$, \cref{fig:impact application installation,,fig:comparison cost function} specifically highlight the results for $d^i \in \{1, 4, 7, 10\}$. To facilitate comparison, the same color is maintained for each value throughout the figures in the section. Notably, the linestyle chosen for the case with $d^i = 10$ in \cref{fig:impact application installation,,fig:comparison cost function} matches the line corresponding to DRCPO in \cref{fig:comparison learning methods}, as they represent the same data. Again, in order to compare different results, the values appearing in \cref{fig:impact application installation} are normalized with respect to the optimal discounted reward which is obtained, in each experiment, assuming applications are installed on just one server. The plot displays median data values, and a box plot represents the data distribution. Furthermore, a quadratic regression line illustrates the overall  trend of the median data. The highest value obtained in each experiment and for each number of applications per server is recorded, ensuring that at least $50$\% of constraints are respected and any additional violations remain below $5$\%. This criterion accommodates the tendency of discounted costs to closely approach constraints while occasionally surpassing them. 

From \cref{fig:impact application installation}, we first observe that installing edge applications on all servers appears to be the configuration with best performance across most of the examined experiments. In particular, with just one application per server ($d^i = 1, \forall \ i$), certain servers end up receiving a disproportionately large volume of flows. This is the case when they host applications interested in flows classes with very high arrival rates (or very low departure rates). As a result, feasibility constraints on the access bandwidth require them to admit only a small fraction of flows. On the other hand, as the number of applications per server increases, load balancing attenuates the presence of such hot-spots. The increase of the long term reward eventually levels off: it reaches a plateau towards the highest possible value, 
as it shows the black regression line which depicts the trend of the median rewards. 
In these experiments the mean increase in reward is around $20\%$ passing from $d^i = 1$ to $d^i = 10$. 

Finally, \cref{fig:comparison cost function} provides further insight into the learning dynamics for reward and cost, respectively, for different number of applications per server. The top plot reports on the learning dynamics for the discounted reward, averaged and normalized across all the experiments: it is clear that for higher values of $d^i$ the discounted reward is higher and the convergence to an optimal solution is faster. The subsequent plots in \cref{fig:comparison cost function}b/c/d/e represent the learning dynamics of the cost function as the number of applications per server increases, namely for $d^i \in \{ 1, 4, 7, 10 \}$, respectively. In these plots, the upper (lower) boundary of the colored area denotes the dynamics of the cost of the server with highest (lowest) associated cost, which may change as the number of episodes increase. The line in the middle denotes the average cost across all the servers. 
It is apparent that the difference between the highest cost and the lowest is substantially higher in the case with $d^i = 1$, for the same reason previously described. On the contrary, this difference decreases at the increase of $d^i$. In the extreme case $d^i = 10$ all servers consistently maintain costs proximal to their respective constraints. A higher number of applications per server apparently grants more efficient utilization of available resources, and consequently it increases the discounted rewards across most of the sample data. \\
Another reason behind the poor performances in the case with lower values of $d^i$ is that, as previously mentioned, the implementation of DRCPO presented here exclusively adopts deterministic policies for practical reasons, while, as indicated in \cref{thm:monotone}, the optimal policy is stochastic in one state per destination server. The deterministic nature of the sought policy has a particularly adverse effect on the performance of DRCPO, especially for lower values of $d^i$, as observed in \cref{fig:comparison cost function}. This is likely because the state with the optimal stochastic policy is more frequently visited in these cases.\\
Developing an SRL algorithm that incorporates stochastic policies to address this specific issue, which is notably problematic only in scenarios with low values of $d^i$, would have been more challenging, slower, and ultimately of limited practical utility for more realistic scenarios involving multiple applications per server. The search for an efficient method to derive a policy that is stochastic in a single state is left for future work.

\paragraph{Comparing different \secondProblem policies.} The last set of numerical experiments of \cref{fig:experiment 3} compares different \secondProblem policies in a scenario where each server may have different parameters. 
These experiments examine an increasing number of servers, i.e., $M \in \{3, 5, 7, 10 \}$. Once the arrival rates ${\zeta_j}$ from class $j$ are defined, the flow arrival rate per class to a designated server is determined based on the routing policy.

For ease of comparison, the values on the y-axis of \cref{fig:experiment 3} are normalized relative to the naive \secondProblem, which routes flows to all servers uniformly at random, regardless of whether they host applications interested in such flows. Consequently, all displayed values exceed unity.

\cref{fig:changing load balancing a} illustrates the normalized reward per server concerning the increasing average ratio between servers' capacity ${\psi^i}$ and access capacity ${\theta^i}$. Notably, for lower values of the ratio $\psi / \theta$, the occupation-based \secondProblem policy exhibits the best performance. In this scenario, the reward shows a non-increasing trend concerning a server's occupation: with low $\psi / \theta$ values, the optimal admission policy tends to admit flows more frequently. Consequently, the occupation-based \secondProblem policy prioritizes routing flows to underloaded servers, attaining the highest cumulative reward. However, as the ratio per server $\psi / \theta$ increases, the performance gap in favour of the adaptive \secondProblem widens. Finally, both the uniform and origin-based policies yield comparable results across the experiments.

\cref{fig:changing load balancing b} depicts the normalized reward as the number of servers increases. Once again, the adaptive \secondProblem consistently outperforms all other methods. Furthermore, as the value of $M$ increases, the advantage over the naive \secondProblem widens.

In these experiments, applications installed on each server are interested, on average, in half of the possible flow classes. Consequently, with increasing $M$, the naive uniform load balancing becomes increasingly inefficient, leading to a significant fraction of flows being discarded. However, it's worth noting that while the adaptive \secondProblem demonstrates clearly superior performance, this comes at the cost of a larger number of policy evaluation steps. 

\begin{figure}
\begin{center}
    \begin{subfigure}[c]{\linewidth}
    \hspace{1cm}
        \includegraphics[width = .8\linewidth]{./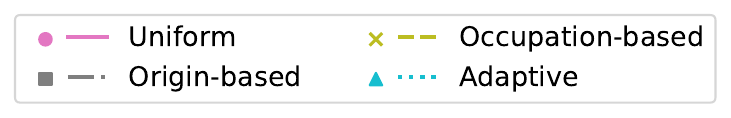}
    \end{subfigure}
    \begin{subfigure}[c]{\linewidth}
    \hspace{.4cm}
        \includegraphics[width = .9\linewidth]{./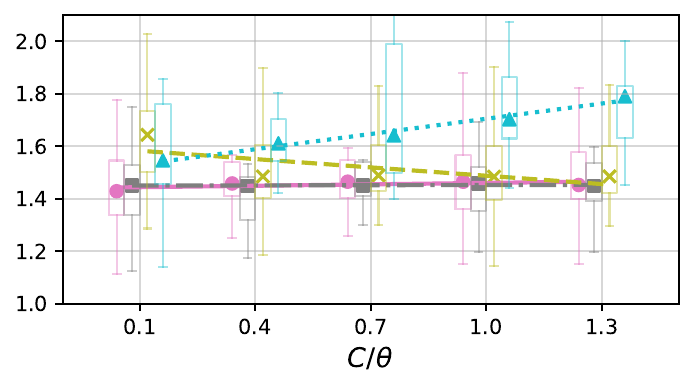}
        \label{fig:changing load balancing a}
    \end{subfigure}
    \begin{subfigure}[c]{\linewidth}
    \hspace{.4cm}
        \includegraphics[width = .9\linewidth]{./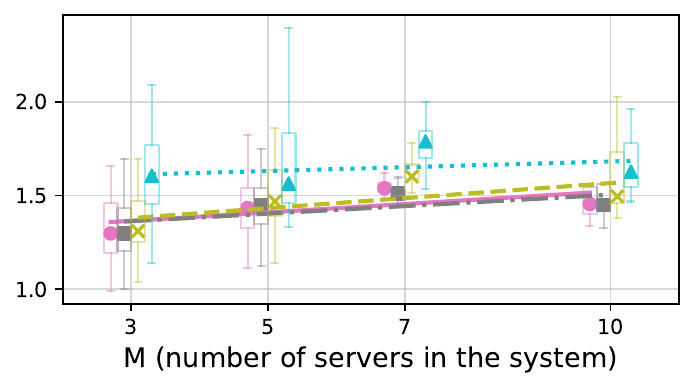}
        \label{fig:changing load balancing b}
    \end{subfigure}
    \caption{Performance of different \secondProblem policies; values on the y-axis normalized w.r.t. naive uniform load balancing: (a)increasing ratio $\psi / \theta$ per server and (b)increasing number of areas $M$.} 
    \label{fig:experiment 3}
        \end{center}
\end{figure}
\section{Conclusions}\label{sec:conclusions}
Pushed by the surge of edge analytics, the integration of flows from diverse classes poses a significant challenge to existing edge computing architectures. In response, we have introduced a decomposed, constrained Markovian framework for the decentralized admission control of varied information flows. The objective is maximizing the utility of edge applications while accounting for constraints on access network bandwidth and compute capacity. Within this framework, we adopt safe reinforcement learning as the solution concept to derive an optimal policy, even in presence of unknown and highly heterogeneous system parameters. Leveraging the structure of the underlying Markovian model, our proposed solution outperforms state-of-the-art approximated deep reinforcement learning approaches, reducing significantly the number of required learning episodes for convergence. Moreover, our novel reward decomposition method, DRCPO, attains an optimal admission policy.

This work marks an initial step in the field of admission control for edge analytics, opening several directions for future investigation. A  particularly challenging one involves developing a comprehensive Markovian model for joint admission control and routing. Furthermore, addressing network constraints beyond edge access capacity requires considering also the core network topology and the requirements of application modules deployed beyond edge servers. Additionally, one could introduce specific application performance metrics into the model. This would permit to obtain specialized admission policies for flow analytic tasks such as video analytics or anomaly detection. In this regard, model extensions to incorporate per-flow information content are left as part of future work.

\bibliography{biblio}              
\bibliographystyle{abbrv}

\clearpage

\appendix

\section{Appendix}
\subsection{Use cases}
\label{appendix:use cases}
In this section, we present the general characteristics of data intensive AI-based applications deployed at the edge, as well as two prominent use cases. Modern edge applications can be commonly characterized by five features: applications process flows generated by a large number of sources of different nature (\ding{202}); these flows can enter or leave the architecture over time due to various events (\ding{203}); the edge infrastructure deploys a set of applications (\ding{204}) to process the flows on edge servers which are equipped with a given amount of resources (e.g., compute and memory) (\ding{205}); finally, the distributed nature of both sources and edge servers imposes the implementation of a control plane mapping flows to compute infrastructure (\ding{206}).

Two practical use cases of AI-based applications are video analytics and anomaly detection in network traffic. We highlight their characteristics, as well as how admission control can impact their performance. 


\paragraph{Video Analytics.} 
Video analytics are a core application of edge
clouds~\cite{ananthanarayanan2017real}. In video analytics, applications typically consist of cascades of functions performing operations (e.g., decoding, background extraction, and object detection) on
incoming video streams (\ding{202})~\cite{choudhary2016video, hung2018videoedge}. Recent literature on video analytics focuses on the problem of orchestrating the deployed applications on edge infrastructure~\cite{jiang2018chameleon,wang2020joint,zhang2020decomposable}. The orchestration
controls the placement of application functions across edge servers (\ding{204}), abiding to infrastructure constraints (\ding{205}), and determines how video traffic data is routed between
different application functions (\ding{206}) across the edge
infrastructure. Yet, as mobile video cameras become more and more widespread, the number of video sources is in constant grow. As mobile cameras join and depart the system (\ding{203}), the number of sources present in the system exceeds the total capacity of the compute infrastructure, two solutions become available: reduce the computational complexity of the applications deployed or select subsets of flows to process. While the first solution is commonly used in the literature~\cite{hung2018videoedge}, this comes at the expense of the applications' accuracy. Conversely, by selecting which sources to process, the system has the potential of maximizing accuracy and minimizing overhead by removing redundant processing (e.g., cameras that have overlapping field of view). Yet, this requires the implementation of admission control algorithms, the core of this paper. Figure~\ref{fig:system} summarizes the described video analytics application where cameras can arrive and depart (Figure~\ref{fig:arrival_and_departure}) and are mapped to existing infrastructure (Figure~\ref{fig:state}).
 
\paragraph{Anomaly Detection.} Network anomaly detection defines the methods that target the identification of anomalous behaviors in network traffic caused by unusual activities that may indicate a security breach. Modern anomaly detection techniques~\cite{bronzino2021traffic,wan2022retina,seufert2024marina,borgioli2023real} commonly involve the use of machine learning (ML) models to monitor traffic flows (\ding{202}) and map them to critical events. For this purpose, practical solutions manipulate raw network flows, transforming packets into representations that are amenable for input to the ML models~\cite{bronzino2021traffic,wan2022retina,seufert2024marina}. Such transformations go from aggregate statistics (e.g., calculation of flow sizes) to more complex operations (e.g., gray-scale image representations~\cite{wang2017hast}). However, as traffic representations increase in complexity, monitoring systems are forced to filter out set of flows that can be relevant for detection~\cite{bronzino2021traffic}. To this end, admission control techniques become crucial for ensuring an effective real-time monitoring by selecting flows of interest from the network's traffic.  In this scenario, a controller becomes in charge of evaluating the relevancy of each new incoming flow and decide whether to admit to the processing pipeline (\ding{203}). In case a flow is admitted, compute and memory resources are allocated (\ding{205}) to compute statistics used by the ML model for the treatment of the flow. Such monitoring tasks are conventionally performed within the access infrastructure, i.e., the edge (\ding{204}),  to avoid overheads and bottlenecks generated transferring copies of the traffic to a centralized location at the core of the network. Furthermore, the controller can select for each flow the monitoring node in the edge infrastructure(\ding{206}) and reduce  network overhead and detection response time~\cite{borgioli2023real}.

\subsection{Related works}
\label{appendix:related}

Our proposed solution for multi server edge-systems extends early models of \textit{trunk-reservation} for single-server loss systems \cite{miller1969queueing,blanc1992optimal}. There, for a single constraint, the optimal solution is a stairway-type threshold policy~\cite{feinberg1994,feinberg2006}. Optimal threshold policies for one-dimensional state-dependent rewards have been studied early on for single queues, see, e.g., \cite{lippman1974}. 
In our case, however, the system is multi-server, rewards are multi-dimensional and state-dependent, so that monotone policies are difficult to characterize beyond the single server case addressed, e.g., in \cite{IGA}. 

The works \cite{brown1998optimizing,lilith2005using,senouci2004call,IGA} use RL for the admission control of QoS differentiated traffic classes, possibly coupled to resource allocation mechanisms to address their specific resource requirements \cite{lilith2005using}. More recently, network function chaining for jobs with end-to-end deadlines was addressed in \cite{millnert2018achieving}. The authors of \cite{afifi2021reinforcement} provided heuristic QoS guarantees for wireless nodes by cascading admission control and resource allocation via DRL. In \cite{raeis2020reinforcement} a multi-server queuing system with a non-linear delay constraint resorts to RL coupled with a Lagrangian-type heuristics. In all those works, and, to the best of the authors' knowledge, in the related literature, the multi-server admission control problem under access network constraints has not been addressed so far. 
In the edge computing literature, reward decomposition has been used recently for task offloading towards multiple edge servers in \cite{chen2018optimized}, but in the unconstrained setting. Conversely, this work leverages the CMDP framework to provide a provably optimal SRL solution via reward decomposition. 

\subsection{Proof of Theorem \ref{thm:monotone}}\label{appendix:proof structure optimal policy}

Before the actual proof, we need some preliminary results. Let us partition the state space $\S=\bigcup_i \S^i$ where $\S^i=\{s \in \S \mid s=(x,j,i),\, \forall x,\, \forall j \}$. That is, $\S^i$ is the set of states whose destination server is $i$. Let us define 
$\rho(s,a)$ the stationary state-action distribution. From a know result in CMDP theory \cite{CMDP}  
\begin{lemma}\label{lem:dualLP}
An optimal stationary state-action distribution $\rho^*$ for \ref{MVAC} solves the following dual linear program
\begin{align}
\mathop{\mbox{maximize:}}_{\rho(s,a)}   \; & \sum_{i=1}^M \sum_{s \in \S^i, a \in \A(s)}r(s,a) \rho(s,a)   \label{DLP}\tag{DLP}                                                                                                       \\
\mbox{subj. to:}                 & \sum_{i=1}^M \sum_{s \in \S^i, a \in \A(s)} \rho(s,a) [ \delta_{s\underline s}- \gamma p(\underline s |s,a)] = \beta(\underline s), \; \underline s\in \S \label{eq:bellman}             \\
	                                     & \sum_{s \in \S^i, a \in \A(s)} c^i(s,a) \rho(s,a)\leq  \theta_i, \; 1\leq i\leq M  \label{eq:constraintApp}                                                        \\
	                                     & \rho(s,a)\geq 0, \quad \forall (s,a)  \in \S \times \A  \label{eq:nonnegApp}
\end{align}
where $\beta$ is the initial distribution and $\delta_{s\underline s}=1$ if $s=\underline s$ and zero otherwise. The corresponding optimal policy writes, for non transient states, as  
\begin{equation}
\pi^*(a|s)=\frac{\rho^*(s,a)}{\sum_{a'\in \A} \rho^*(s,a')}
\end{equation}
\end{lemma}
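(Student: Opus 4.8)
The plan is to establish the classical linear‑programming/occupation‑measure correspondence for discounted CMDPs \cite{CMDP}, specialized to the partitioned structure of \eqref{DLP}. For a stationary policy $\pi$ and the initial law $\beta$, introduce the discounted occupation measure
\begin{equation*}
\rho_\pi(s,a)=\sum_{t\ge 0}\gamma^{t}\,\Pro{S(t)=s,\,A(t)=a\mid S(0)\sim\beta}.
\end{equation*}
First I would check the \emph{forward} direction: summing the one‑step identity $\Pro{S(t+1)=\underline s}=\sum_{s,a}\Pro{S(t)=s,A(t)=a}\,p(\underline s\mid s,a)$ against $\gamma^{t}$ and using $A(t)\sim\pi(\cdot\mid S(t))$ shows that $\rho_\pi\ge 0$ and that $\rho_\pi$ satisfies the flow equations \eqref{eq:bellman}; moreover $\sum_{s,a}r(s,a)\rho_\pi(s,a)=\sum_{t\ge 0}\gamma^{t}\expvalDist{\pi}{r(S(t),A(t))}=J_\pi(\beta)$, and, since the objective splits over the partition $\S=\bigcup_i\S^{i}$ while $c^{i}$ is supported on $\S^{i}$ and depends on $s$ only through $(X^{i},\1{I=i})$ (cf.\ \eqref{eq:immediate cost}), $\sum_{s\in\S^{i},a}c^{i}(s,a)\rho_\pi(s,a)=K^{i}_\pi(\beta)$. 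Hence every policy feasible for \ref{MVAC} produces a point feasible for \eqref{DLP} with the same objective and the same per‑server cost values.

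\paragraph{Reverse direction.} Given a feasible $\rho$ for \eqref{DLP}, put $\rho(s):=\sum_a\rho(s,a)$ and define $\pi_\rho(a\mid s)=\rho(s,a)/\rho(s)$ wherever $\rho(s)>0$ and (deterministically, arbitrarily) elsewhere. The key step is the identity $\rho_{\pi_\rho}=\rho$: the relation $\rho(s,a)=\rho(s)\,\pi_\rho(a\mid s)$ holds for \emph{all} $s$ (it is $0=0$ at zero‑mass states), so the marginal $\rho(\cdot)$ solves the linear system $v=\beta+\gamma P_{\pi_\rho}^{\!\top}v$ with $P_{\pi_\rho}$ the kernel induced by $\pi_\rho$; the marginal of $\rho_{\pi_\rho}$ solves the same system, and since $\gamma<1$ the map $\mathrm{Id}-\gamma P_{\pi_\rho}^{\!\top}$ is invertible, so the two marginals agree and $\rho_{\pi_\rho}(s,a)=\rho_{\pi_\rho}(s)\pi_\rho(a\mid s)=\rho(s,a)$. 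Reading the forward computation in reverse then gives that $\pi_\rho$ is feasible for \ref{MVAC}, with $J_{\pi_\rho}(\beta)=\sum_{s,a}r(s,a)\rho(s,a)$ and $K^{i}_{\pi_\rho}(\beta)=\sum_{s\in\S^{i},a}c^{i}(s,a)\rho(s,a)$.

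\paragraph{Conclusion and the delicate point.} The two constructions are mutually inverse and preserve objective and constraint values, so $J^{*}(\beta)$ equals the optimum of \eqref{DLP}; this optimum is attained because the feasible set of \eqref{DLP} is nonempty (feasibility of \ref{MVAC}) and bounded --- summing \eqref{eq:bellman} over $\underline s$ forces $\sum_{s,a}\rho(s,a)=1/(1-\gamma)$ --- hence a polytope. Thus an optimal (vertex) $\rho^{*}$ of \eqref{DLP} yields an optimal policy for \ref{MVAC} through $\pi^{*}(a\mid s)=\rho^{*}(s,a)/\sum_{a'}\rho^{*}(s,a')$, and conversely $\rho_\pi$ solves \eqref{DLP} for every optimal $\pi$; the quotient is well defined precisely where $\sum_{a'}\rho^{*}(s,a')>0$, the remaining states being transient/unreachable under $\pi^{*}$ and carrying zero occupation mass, so the policy may be assigned any deterministic action there without affecting $J_{\pi^{*}}$ or $K_{\pi^{*}}$. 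I expect the only genuinely non‑routine step to be the reverse identity $\rho_{\pi_\rho}=\rho$: it hinges on invertibility of $\mathrm{Id}-\gamma P_{\pi_\rho}^{\!\top}$ and on the algebraic relation $\rho(s,a)=\rho(s)\pi_\rho(a\mid s)$ persisting at the zero‑mass states; everything else is bookkeeping with geometric series.
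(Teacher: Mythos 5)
Your argument is correct, and it is the standard discounted occupation-measure/LP correspondence. Note, however, that the paper does not actually prove this lemma at all: it is stated as ``a known result in CMDP theory'' with a citation to Altman's monograph, and the paper's genuine proof effort is reserved for Theorem~\ref{thm:monotone}, which \emph{uses} the lemma. So you are not diverging from the paper's route so much as supplying the proof the paper delegates to the reference. Your forward direction (telescoping the flow equations against $\gamma^{t}$), your reverse direction (recovering $\rho_{\pi_\rho}=\rho$ from uniqueness of the solution of $v=\beta+\gamma P_{\pi_\rho}^{\top}v$, with the $0=0$ observation at zero-mass states), and your attainment argument (total mass $1/(1-\gamma)$, hence a compact polytope) are exactly the ingredients of the cited theorem, and you correctly flag the only genuinely delicate step. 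One point worth making explicit if this were to be included: the identification of the constraint row $\sum_{s\in\S^{i},a}c^{i}(s,a)\rho(s,a)$ with $K^{i}_\pi(\beta)$ as defined in \eqref{eq:constraint} requires that $c^{i}(s,a)$ vanish for $s\notin\S^{i}$, i.e.\ that the access cost of server $i$ is incurred only on arrivals destined to server $i$; you assert this (``$c^{i}$ is supported on $\S^{i}$''), and it is the intended reading of \eqref{eq:immediate cost}, but the paper's definition of $K^{i}_\pi$ sums $c^{i}$ over all time steps, so the equivalence rests on that support property rather than on the partition of the objective. This is a notational subtlety of the paper, not a gap in your proof. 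The parenthetical ``(vertex)'' is unnecessary for the lemma as stated but harmless.
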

We now provide the main proof. 
\begin{proof}
\noindent i. The statement follows directly from a general result for finite CMDPs \cite{CMDP}[Thm. 3.8]. \\
\noindent ii. When constraint \eqref{eq:constr} is not active for an optimal policy, an optimal deterministic stationary policy exists \cite{puterman2014markov}. \\
\noindent iii.  We assume at least one active constraint, otherwise we fall into the case described in ii. We start from an optimal pair $\pi^*$ and $\rho^*$ solving the dual LP as described \cref{lem:dualLP}.  Hence, we iterate an exchange argument over the partition $\{\S^i\}$ of the state space in order to obtain a solution which is not worse off and has the required property.  Let us consider $\S^1$ without loss of generality, and define the CMDP $\M_o=(\S_o,\A,P_o)$ with state space $\S_o=\S^1$ and with constraint \eqref{eq:constraintApp} corresponding to $\theta_1  P_1$ where $P_1=\sum_{s \in \S_1, a \in \A(s)} \rho^*(s,a)$. The action set $\A'(s)=\A(s)$ for $s\in \S^1$. Finally, the transition probabilities $p_o$ and the transition rewards  $r_o$ of $\M_o$ are those induced by $\pi^*$ for first-return transitions from $\S_1$ into $\S_1$. Let us consider an optimal solution $\rho_o^*(s,a)$ for the corresponding dual program of $\M_o$: \begin{align}
\mathop{\mbox{maximize:}}_{\rho_o(s,a)}   \; & \sum_{s \in \S^1, a \in \A(s)}r(s,a) \rho_o(s,a)   \label{DLPsingle}\tag{DLPo}                                                                                                       \\
\mbox{subj. to:}                 &  \sum_{s \in \S^1, a \in \A(s)} \rho_o(s,a) [ \delta_{s\underline s}- \gamma p_o(\underline s |s,a)] =  \frac{\beta(\underline s)}{\beta_1}, \; \underline s\in \S^1 \label{eq:bellmanSingle}             \\
	                                     & \sum_{s \in \S^1, a \in \A(s)} c^1(s,a) \rho_o(s,a)\leq P_1\theta_1,   \label{eq:constraintSingle}                                                        \\
	                                     & \rho_o(s,a)\geq 0, \quad \forall (s,a)\in \S_1\times \A   \label{eq:nonnegSingle}
\end{align}
where distribution $\beta_1 = \sum_{s \in \S_1}  \beta( s) $. By applying Thm 3.8 in \cite{CMDP} to \ref{DLPsingle}, we deduce the existence of an optimal policy $\rho_o^*(s,a)$  for $\M_o$, which is randomized in at most one state. 
Now, consider the normalized distribution $\frac{\rho^*(s,a)}{P_1}$ defined on $ \S^1 \times \A$: it respects the constraint  (\ref{eq:constraintSingle}) and solves \ref{DLPsingle}. Hence, $\sum_{s \in \S^1, a \in \A(s)}r_o(s,a) \rho_o^*(s,a) \geq \sum_{s \in \S^1, a \in \A(s)}r_o(s,a) \rho^*(s,a)/P_1$. 

We can now define the following state-action probability distribution for the original MDP $\M$: 
\begin{eqnarray}\label{eq:dist}
\rho_o^{\star}(s,a)= 
\begin{cases}
\rho^*(s,a)&\text{if} \quad s\not \in \S^1\cr
\rho_o^*(s,a) P_1   &\text{if} \quad s\in \S^1
\end{cases}
\end{eqnarray}
Clearly, $\rho_o^{\star}$ is still a solution of  \ref{DLP}. Furthermore it is not worse off $\rho_o^*(s,a)$, which concludes the proof. 
\end{proof}

\subsection{Proof of Proposition \ref{prop:convergence}}
\label{appendix:proof covergence DRCPO}
\begin{proof}
(Sketch) The convergence of the learning algorithm to the optimal solution requires a stochastic approximation argument. First, \cite{tessler2018reward}, a sketch of proof is outlined for the convergence of the template $3$-timescale constrained actor-critic learning algorithm in a scenario with discounted reward and a single cost function, without resorting to reward decomposition. However, in the general case with $M$ constraints, this algorithm may not converge to the optimal solution since the studied constrained MDP problem is not convex.\\
Nevertheless, in \cite{paternain2019constrained}, it is shown that, in general, the CMDP problem, here \ref{MVAC}, and its corresponding dual problem, here \eqref{eq:equivalent unconstrained problem}, have no duality gap, provided that Slater's condition is satisfied. In particular, the main result there implies that the problem becomes convex in the dual domain. In the context of CMDPs, meeting the Slater's condition translates to having a policy that satisfies all constraints \cite{CMDP}. 
In our CMDP model, this condition is indeed fulfilled by the feasible policy $\pi \equiv 0$. This establishes the equivalence between the two problems. Hence, \cite{paternain2019constrained} guarantees that an optimal solution exists and can be determined, e.g., alternating value iteration and dual gradient ascent when the kernel is known. 
The convergence of the $3$-timescale stochastic approximation iteration is hence proved by applying the ODE method developed in \cite{borkar2005} for the case of an actor-critic with multiple constraints.
\end{proof}
\

\subsection{Heuristic load balancing policies}
\label{appendix: heuristic policies}
The heuristic policies considered in our experiments are {\em uniform}, {\em origin-based}, {\em occupation-based} and {\em adaptive} \secondProblem. They are described as follows:

{\em Uniform:} flows are routed uniformly at random towards all available servers. 

{\em Origin-based:} it routes the flows of class $j$ based on the departure rates and the access bandwidth constraints of destination servers, namely using routing probability $ u_j^i = \exp (\theta^i  \mu_j ) /\sum_k \sum_h \exp (\theta^k  \mu_h )$. The rationale is that flows of classes with higher departure rate $\mu_j$ engage lesser servers' resources, and destination server with higher access constraint $\theta^i$ handles more flows per time unit without exceeding its capacity. 

{\em Occupation-based:} the routing probability to destination server $j$ accounts the servers' state and access capacity $$u_j^i(x^1, \dots, x^M, j) = \exp(-x_j^i / \theta_i)\sum_k \exp(- x_j^k / \theta_k)$$In order to define the  routing probabilities $u_j^i$, this routing policy requires an initial admission policy, obtained using the uniform \secondProblem. Its consequent evaluation gives the corresponding ergodic occupancy distribution per server and therefore the routing probabilities. A further evaluation step for admission control policy ensures the respect of admission capacity constraints \eqref{eq:constr}.

\subsection{Additional details on the convergence proof of the adaptive load algorithm}
\label{appendix:convergence stochastic approximation}
In order to prove the convergence to a local optimal point of the stochastic approximation method, it is necessary to establish certain regularity properties of the objective function:
\begin{proposition}
\label{prop:convergence stochastic approximation}
Fixed the admission policies for each server, $J_u(\beta)$ is continuously differentiable in the routing probabilities $\{u_j^i\}$. 
\end{proposition}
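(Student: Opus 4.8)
\textbf{Proof plan for Proposition \ref{prop:convergence stochastic approximation}.}

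The plan is to express $J_u(\beta)$ as an explicit rational function of the routing probabilities $\{u_j^i\}$ and argue that it is a composition of smooth maps. First I would observe that, with the admission policies fixed for every server, the sampled MDP $\M$ depends on the routing vector $u=\{u_j^i\}$ only through the quantities $\Lbda^i=\sum_j u_j^i\lbda_j$ and $\p_j^i=u_j^i\lbda_j/\Lbda^i$ entering the transition kernel \eqref{eq:transition probabilities}--\eqref{eq:transprob}. Hence the kernel entries $p(s'|s,a)$ are themselves rational functions of $u$ whose only possible singularities occur when some $\Lbda^i=0$; on the relevant domain (where at least one class with positive rate is routed to each server, or, more simply, on the interior of the simplex where all $u_j^i>0$) these denominators are bounded away from zero, so each $p(s'|s,a)$ is $C^\infty$ in $u$. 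The death probabilities $\widehat p(u;x_j^i)$ in \eqref{eq:deathprob} depend on $u$ only through $\Lbda=\sum_j\lbda_j$, which is constant, so they contribute nothing problematic.

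Next I would invoke the standard closed form for the discounted value of a finite MDP under a fixed stationary policy: writing $P_\pi(u)$ for the (finite) transition matrix induced by the fixed admission policy $\pi$ and routing $u$, and $r_\pi$ for the associated reward vector, the value function is $v_u = (I-\gamma P_\pi(u))^{-1} r_\pi$, and $J_u(\beta)=\beta^\top v_u$. Since $\gamma<1$ and $P_\pi(u)$ is substochastic row-stochastic, $I-\gamma P_\pi(u)$ is invertible for every $u$ in the domain, with a uniformly bounded inverse (spectral radius of $\gamma P_\pi(u)$ is at most $\gamma$). The reward vector $r_\pi$ is built from the immediate reward \eqref{eq:immediate reward}, which depends on $u$ only through the class/server selection probabilities $\p_j^i$ appearing in the averaging over which flow arrives — again a rational, hence smooth, dependence. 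Therefore $u\mapsto J_u(\beta)$ is a finite composition of smooth maps (affine maps $u\mapsto\Lbda^i$, the smooth map $u\mapsto\p_j^i$, the entrywise-smooth map $u\mapsto P_\pi(u)$, matrix inversion $A\mapsto A^{-1}$ which is $C^\infty$ on the open set of invertible matrices, and the linear functional $v\mapsto\beta^\top v$), and so is continuously differentiable — in fact $C^\infty$ — on the interior of the routing-probability domain.

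The main obstacle, and the only point requiring genuine care, is the behaviour at the boundary of the domain, where some $u_j^i\to 0$ can send $\Lbda^i\to 0$ and make $\p_j^i=u_j^i\lbda_j/\Lbda^i$ a $0/0$ indeterminate form; there the expression for $P_\pi(u)$ as written is not manifestly smooth. I would handle this either by restricting the claim (and hence the stochastic approximation iteration, whose projection $\Pi[\cdot]$ can be taken onto a slightly shrunk box bounded away from the bad faces) to the region where each server receives positive aggregate rate, or by checking that the indeterminacy is removable: when $\Lbda^i=0$ server $i$ receives no arrivals, the corresponding block of the kernel degenerates consistently, and the value still depends continuously differentiably on the remaining coordinates. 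The cleaner route for the paper's purposes is the former: the SPSA scheme of \eqref{eq:stocapprox} already projects onto $[0,1]$, and one may equally project onto a compact sub-box on which $C^1$-smoothness — the only regularity property needed to invoke the local-convergence guarantee of \cite{fu1997optimization} — holds uniformly.
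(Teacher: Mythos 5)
Your argument is essentially the paper's own proof: both express $J_u(\beta)$ through the resolvent $(I-\gamma P_u)^{-1}r$ (the paper writes the equivalent Neumann series $\beta(I+\gamma P+\gamma^2P^2+\dots)r$) and observe that the kernel entries depend smoothly on $u$ only through $\p_j^i=u_j^i\lbda_j/\Lbda^i$, so $J_u(\beta)$ is a composition of smooth maps. Your additional discussion of the degenerate boundary where $\Lbda^i\to 0$ makes $\p_j^i$ a $0/0$ form is a point the paper's proof passes over silently; restricting to routing vectors giving each server positive aggregate rate (or checking the indeterminacy is removable) is indeed what is needed for the claim to hold as stated, and is a worthwhile refinement rather than a departure from the paper's route.
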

\begin{proof} 
Let consider admission policies  $\tilde{\pi}$ fixed for each device and denote as $P_u$ the transition probability matrix of the system. Clearly, all the entries of $P_u$ are differentiable w.r.t. $\{u_j^i\}$, as they depend on the transitions of the single servers and in \eqref{eq:transition probabilities} we could write $\p_j^i = u_j^i \lbda_j / \Lbda^i$. To conclude, the objective function can be computed as 
\begin{equation}
J_u(\beta) = \beta \rp{ I + \gamma P_{\tilde{\pi}} + \gamma^2 P_{\tilde{\pi}}^2 + \dots } r_{\tilde{\pi}}
\end{equation} 
where $r_{\tilde{\pi}}$ is the per-state instantaneous reward vector. 
\end{proof}

\subsection{Pseudocode for the adaptive load balancing algorithm}

\begin{algorithm}[h]
    \caption{Adaptive Load Balancing}
    \label{alg:alternating routing improvement}
    \begin{algorithmic}[1]
    \Require $\epsilon > 0$, episode length $T$ 
    \State  {\bf Init:} uniform \secondProblem
    \State  {\bf Calculate:} optimal admission policy $\pi^i$, $i=1,\ldots,M$
    \While $\left| u_{j, old}^i - u_{j, new}^i \right| > \epsilon$\\
    $\quad u_{j, old}^i \gets u_{j, new}^i$\\
    $\quad${\bf Improve the \secondProblem policy:} by \eqref{eq:stocapprox}\\
    $\quad${\bf Learn an optimal admission policy:}     method in \cref{sec: learning optimal admission policy}, for fixed episode length $T$
    \EndWhile \\
    \Return $u_{j, new}^i$
    \end{algorithmic}
\end{algorithm}

\subsection{Parameters of the numerical experiments in Section \ref{sec: numerical results}}
\label{appendix:details numerical experiments}

\begin{table}[H]
\renewcommand{\arraystretch}{1.3}
    \centering
    \begin{tabular}{|c|c|c|c|}
    \hline
		\rowcolor{gray!30}{\bf Parameter}  & {\bf \cref{fig:comparison learning methods}} & {\bf \cref{fig:impact application installation,,fig:comparison cost function}} & {\bf \cref{fig:experiment 3}} \\
         \hline
         $\gamma$ & \multicolumn{3}{c|}{$[0.95, 1)$} \\
         \hline
         $M$ &  \multicolumn{2}{c|}{$M$}& $\{ 3, 5, 7, 10 \}$  \\
         \hline
         $d^i$ & $10$ & \multicolumn{2}{c|}{$\{1, \dots, M \}$}\\
         \hline
         $\phi_d$ & $10$ & $d^i$ & 1\\
         \hline
         $\psi^i$ & \multicolumn{2}{c|}{$\{ 20, 21, \dots, 30 \}$} & $\{ 8, \dots, 16 \}$\\ 
         \hline
         $\theta^i$ & \multicolumn{2}{c|}{$1/\rp{20(1-\gamma)}$} & $\left[ \psi/10, 3\psi/2 \right]$ \\
         \hline
         $u_j^i$ & \multicolumn{2}{c|}{$ 1/M$} & -\\
         \hline
         $\lbda_j$ & \multicolumn{2}{c|}{$[1, 2)$} & $\left[ 0, 1.5 \right)$ \\
         \hline
         $\mu_j$ &  \multicolumn{2}{c|}{$\left[ 0, 0.5 \right)$} & $\left[ 0, 0.2 \right)$ \\
         \hline
         \multirow{3}{*}{r(s, 1)} & \multicolumn{2}{c|}{\multirow{2}{*}{$a \exp \{ - w_{j, d} \frac{b}{M} \} + c$}} &  i. $ \frac{M}{j+1}$\\  
         & \multicolumn{1}{c}{} & & ii. $\frac{M}{j+1} \exp \{ x_j^i \}$\\
         & \multicolumn{2}{c|}{$a, b \in \left[ 1, 5 \right]$, $c \in \left[ 0, .1 \right]$} & iii. $\frac{M}{j+1}   \exp \{ -x_j^i \frac{j}{2M} \}$\\
         \hline
         $c(s, 1)$ & \multicolumn{2}{c|}{$y^i$}& $y^i/M$\\
         \hline
    \end{tabular}
    \caption{System parameters used in the numerical experiments.} 
\label{tab:parameters experiments}
\end{table}


\end{document}